\newtheorem{theorem}{Theorem}[section]
\newtheorem{corollary}[theorem]{Corollary}
\newtheorem{proposition}[theorem]{Proposition}
\newtheorem{lemma}[theorem]{Lemma}
\newtheorem{example}[theorem]{Example}
\newtheorem{definition}[theorem]{Definition}
\newtheorem*{question*}{Question} 
\newtheorem*{problem*}{Problem} 
\theoremstyle{remark}
\newtheorem{remark}[theorem]{Remark}
\numberwithin{equation}{section}
\newcommand{\Rplus}{\mathbb{R}_{>0}}
\newcommand{\Rnn}{\mathbb{R}_{\geq 0}}
\newcommand{\Edg}{E}
\newcommand{\St}{\mathcal{S}}
\newcommand{\CRS}{chemical reaction system }
\newcommand{\invtPoly}{\mathcal{P}}
\newcommand{\R}{\mathbb{R}}
\definecolor{dgreen}{rgb}{.2,.6,.2}
\colorlet{darkgreen}{black!30!dgreen}
\definecolor{dblue}{rgb}{0.0,0.0,0.68}
\DeclareMathOperator{\im}{im}
\DeclareMathOperator{\sign}{sign}
\begin{document} 

\title{A global convergence result for processive multisite phosphorylation systems}
\author{Carsten Conradi and Anne Shiu}
{\address{	
	CC: Max-Planck-Institut Dynamik komplexer technischer Systeme, 
    Sandtorstr.\ 1, 39106 Magdeburg, Germany;
    AS: Dept.\ of Mathematics,
    University of Chicago, 5734 S.\ University Ave., Chicago IL 60637 USA, present address:
    Dept.\ of Mathematics, Mailstop 3368, Texas A\&M Univ., College Station, TX~77843--3368, USA
}}
\footnote{AS was supported by the NSF
  (DMS-1004380 and DMS-1312473).  CC was supported in part from BMBF
  grant Virtual Liver (FKZ 0315744) and the research focus dynamical
  systems of the state Saxony-Anhalt.}

\email{conradi@mpi-magdeburg.mpg.de,annejls@math.tamu.edu}

\date{\today}

\begin{abstract}
Multisite phosphorylation plays an important role in intracellular 
signaling.  There has been much recent work aimed at understanding the 
dynamics of such systems when the phosphorylation/dephosphorylation 
mechanism is distributive, that is, when the binding of a substrate 
and an enzyme molecule results in addition or removal of a single 
phosphate group and repeated binding therefore is required for 
multisite phosphorylation.  In particular, such systems admit 
bistability.  Here we analyze a different class of multisite systems, 
in which the binding of a substrate and an enzyme molecule results in 
addition or removal of phosphate groups at all phosphorylation sites. 
That is, we consider systems in which the mechanism is  processive, 
rather than  distributive. We show that in contrast with distributive 
systems, processive systems modeled with mass-action kinetics do not 
admit bistability and, moreover, exhibit rigid dynamics: each 
invariant set contains a unique equilibrium, which is a global 
attractor.  Additionally, we obtain a monomial parametrization of the 
steady states. 
Our proofs rely on a technique of Johnston for using ``translated''
networks to study systems with ``toric steady states'', recently given
sign conditions for injectivity of polynomial maps, and a result from
monotone systems theory due to Angeli and Sontag.
\vskip 0.1cm
\noindent \textbf{Keywords:} reaction network, mass-action kinetics,
multisite phosphorylation, global convergence, steady state, monomial
parametrization, monotone systems
\end{abstract}

\maketitle

\section{Introduction}\label{sec:intro}
A biological process of great importance, phosphorylation is the
enzyme-mediated addition of a phosphate group to a protein substrate,
which often modifies the function of the substrate.
Additionally, many such substrates have more than one site at which
phosphate groups can be attached.  Such multisite phosphorylation
systems may be {\em distributive} or {\em processive}. In distributive
systems, each 
  enzyme-substrate binding
results in 
one addition or removal of a phosphate group,
whereas in processive systems, when an enzyme catalyzes the addition
or removal of a phosphate group, then phosphate groups are added or
removed from all available sites before the enzyme and substrate
dissociate.
  The (fully) processive and distributive mechanisms can be viewed
  as the extremes of a whole spectrum of possible
  mechanisms \cite{sig-041}. Some proteins are phosphorylated at $N>1$
  sites with each enzyme-substrate binding, but not necessarily at all
  available sites. An example that is briefly discussed in
  \cite{sig-041} is the yeast transcription factor Pho4. With every
  enzyme-substrate binding it is on average phosphorylated at two
  sites (cf.~\cite{sig-041} and references therein). Other proteins,
  however, are phosphorylated at all available sites in a single
  encounter with the kinase (examples are the splicing factor ASF/SF2
  or the Crk-associated substrate (Cas), cf.\ \cite{PM,sig-041}). For
  more biological examples and discussion of the biological
  significance of multisite phosphorylation, we refer the reader to
  the work of Salazar and H{\"o}fer \cite{sig-041} and of
  Gunawardena~\cite{Guna_threshold,Guna}. An excellent 
  source for processive systems in particular is the review article of
  Patwardhan and Miller~\cite[\S 2--5]{PM}.

Ordinary differential equations (ODEs) frequently are used to
describe the dynamics of the chemical species involved in multisite
phosphorylation, e.g.\ protein substrate, (partially) phosphorylated
substrate, catalyzing enzymes, enzyme-substrate complexes,
and so on. A protein substrate can have many phosphorylation sites
(examples are discussed in \cite{TG2}), and the number of variables
and parameters increases with the number of phosphorylation
sites. Detailed models of multisite phosphorylation are therefore
large, while only a limited number of variables can be
measured. Thus, parameter values can only be specified within large
intervals, if at all (that is, parameter uncertainty is high). For
all these reasons, mathematical analysis of models of multisite
phosphorylation typically requires analysis of parametrized families 
of ODEs.

Much of the prior work on the mathematics of phosphorylation systems 
has focused on parametrized families of ODEs that describe
multisite phosphorylation under a sequential and {\em distributive}
mechanism; for instance, see Conradi {\em et al.}~\cite{MAPK},
Conradi and Mincheva~\cite{a6maya}, Feliu and
Wiuf~\cite{enzyme-sharing}, Hell and Rendall~\cite{bistable},
Holstein {\em et al.}~\cite{KathaMulti}, Flockerzi {\em et
  al.}~\cite{FHC}, 
Manrai and Gunawardena~\cite{ManraiGuna}, 
  Markevich {\em et al.}~\cite{Markevich04}, 
P\'erez Mill\'an {\em et al.}~\cite{TSS}, 
Thomson and Gunawardena~\cite{TG,TG2},
and 
Wang and Sontag \cite{WangSontag}. 
Here we concentrate instead on the multisite phosphorylation by a
kinase/phosphatase pair in a sequential and {\em processive}
mechanism, building on work of Gunawardena~\cite{Guna} and
Conradi~{\em et al.}~\cite{ConradiUsing}.
  While models of distributive phosphorylation can admit multiple
  steady states and multistability whenever there are at least two 
  phosphorylation sites~\cite{bistable,TG2,WangSontag}, it was shown in
  \cite{ConradiUsing} that models of processive phosphorylation at
  two phosphorylation sites cannot admit more than one steady state in
  each invariant set.  Whether or not this holds for models with an
  arbitrary number of phosphorylation sites has not been discussed
  previously. Also, it is known from \cite{TG} that there exists a
  rational
  parametrization of the set of all positive steady states for processive systems. However,
  no explicit parametrization has been given for an arbitrary number of
  phosphorylation sites.  
  
  The present article addresses both of the aforementioned topics: the number of steady states and a parametrization of steady states of processive phosphorylation systems.  
  We show that processive phosphorylation
  belongs to the class of chemical reaction systems with ``toric steady
  states'' (as does distributive phosphorylation~\cite{TSS} and other related networks~\cite{MPM_MAPK}) and present a
  particular (rational) parametrization of all positive steady
  states (Proposition~\ref{prop:parametrization}). By applying the Brouwer fixed-point theorem and recent
  results on injectivity of polynomial maps, 
we conclude
  that every
  invariant set contains a unique element of this parametrization and
  that this element is the only steady state within the invariant
  set (Theorem~\ref{thm:existence-uniqueness-via-signs}). Finally, in Theorem~\ref{thm:global_conv}, we prove -- for every
  invariant set -- global convergence to that unique steady
  state by applying a result of Angeli and Sontag from monotone
  systems theory~\cite{AS}.


The outline of our paper is as follows.  In Section~\ref{sec:dyn_sys},
we introduce the dynamical systems arising from chemical reaction
networks taken with mass-action kinetics.  The networks of interest in
this work, those arising from multisite phosphorylation by a
sequential and processive mechanism, are introduced in
Section~\ref{sec:processive}.  In Section~\ref{sec:augment}, we
describe a ``translated'' version of the network which will aid our
analysis.  In Section~\ref{sec:steady_state}, we prove the existence
and uniqueness of steady states of the processive multisite system
taken with mass-action kinetics and obtain a monomial parametrization
of the steady states. 
Global stability is proven in Section~\ref{sec:convergence}, 
and a discussion appears in Section~\ref{sec:discussion}.

\section{Dynamical systems arising from chemical reaction networks}
\label{sec:dyn_sys}
In this section we recall how a chemical reaction network gives rise to a dynamical system, beginning with an illustrative example.  
An example of a {\em chemical reaction}, as it usually appears in the
literature, is the following:
\begin{align} \label{reaction_ex}
 \begin{xy}<15mm,0cm>:
 (1,0) ="3A+C" *+!L{3A+C} *{};
 (0,0) ="A+B" *+!R{A+B} *{};
 (0.55,.05)="k" *+!D{\kappa} *{};
   {\ar "A+B"*{};"3A+C"*{}};
   \end{xy}
\end{align} 
In this reaction, one unit of chemical {\em species} $A$ and one of $B$ react 
to form three units of $A$ and one of $C$.  
The {\em educt} (or {\em reactant}) $A+B$ and the {\em product} $3A+C$ are called {\em complexes}. 
The concentrations of the three species, denoted by $x_{A},$ $x_{B}$, and
$x_{C}$, 
will change in time as the reaction occurs.  Under the assumption of {\em
mass-action kinetics}, species $A$ and $B$ react at a rate proportional to the
product of their concentrations, where the proportionality constant is the {\em reaction rate
constant} $\kappa$.  Noting that the reaction yields a net change of two units in
the amount of $A$, we obtain the first differential equation in the following
system:
\begin{align*}
\frac{d}{dt}x_{A}~&=~2\kappa x_{A}x_{B}~ \\
\frac{d}{dt}x_{B} ~&=~-\kappa x_{A}x_{B}~ \\
\frac{d}{dt} x_{C}~&=~\kappa x_{A}x_{B}~.
\end{align*}
The other two equations arise similarly.  A {\em chemical reaction network}
consists of finitely many reactions.  
The mass-action differential equations that a network defines are comprised of a 
sum of the monomial contribution from the reactant of each 
chemical reaction in the network; these differential equations 
will be defined by equations~(\ref{eq:ODE}--\ref{eq:R-for-mass-action}).


\subsection{Chemical reaction systems}
We now provide precise definitions.  A {\em chemical reaction network} 
consists of a finite set of species $\{A_1,A_2,\dots, A_s\}$, a finite set of 
complexes (finite nonnegative-integer combinations of the species), and 
a finite set of reactions (ordered pairs of the complexes).
A chemical reaction network is often depicted by  
a finite directed graph 
whose vertices are labeled by complexes and whose edges 
correspond to reactions.
Specifically, the digraph is denoted $G = (V,\Edg)$, with vertex set $V = \{1,2,\ldots,p\}$
and edge set $\,\Edg \subseteq \{(i,j) \in V \times V : \,i\not= j \}$.  
Throughout this paper, the integer unknowns~$p$, $s$, and $r$ denote the numbers of
complexes, species, and reactions, respectively.  
Writing the $i$-th complex as $y_{i1} A_1 + y_{i2} A_2 + \cdots + y_{is}A_s$ (where $y_{ij} \in \mathbb{Z}_{\geq 0}$ for $j=1,2,\dots,s$), 
we introduce the following monomial:
$$ x^{y_i} \,\,\, := \,\,\, x_1^{y_{i1}} x_2^{y_{i2}} \cdots  x_s^{y_{is}}~. $$
For example, the two complexes in~\eqref{reaction_ex} give rise to 
the monomials $x_{A}x_{B}$ and $x^3_A x_C$, which determine the vectors 
$y_1=(1,1,0)$ and $y_2=(3,0,1)$.  
These vectors define the rows of a $p \times s$-matrix of nonnegative integers,
which we denote by $Y=(y_{ij})$.
Next, the unknowns $x_1,x_2,\ldots,x_s$ represent the
concentrations of the $s$ species in the network,
and we regard them as functions $x_i(t)$ of time $t$.

A directed edge $(i,j) \in \Edg$ represents a reaction 
{$y_i \to
  y_j$} from the $i$-th chemical complex to the $j$-th chemical
complex, and the {\em reaction vector}
 $y_j-y_i$ encodes the
net change in each species that results when the reaction takes
place.  
Also, 
associated to each edge is 
a positive parameter $\kappa_{ij}$, the rate constant of the
reaction.  In this article, we will treat the rate constants $\kappa_{ij}$ as positive
unknowns in order to analyze the entire family of dynamical systems
that arise from a given network as the $\kappa_{ij}$'s vary.  A network
is said to be \emph{weakly reversible} if every connected component of
the network is strongly connected.

A pair of {\em reversible} reactions refers to a bidirected edge
$y_i \rightleftharpoons y_j$ in $E$.  
For each such pair $y_i \rightleftharpoons y_j$,
we designate 
a `forward' reaction $y_i \to y_j$ and a `backward' reaction $y_i
\leftarrow y_j$.
Letting $m$ denote the number of reactions, where we count
each pair of reversible reactions only once, 
the {\em stoichiometric matrix}  
$\Gamma$ is the $s \times m$ matrix whose $k$-th column 
is the reaction vector of the $k$-th reaction 
 (in the forward direction if the reaction is reversible),
i.e., it is the vector $y_j - y_i$ if $k$ indexes the (forward)
reaction $y_i \to y_j$.
The choice of kinetics is encoded by a locally Lipschitz function $R:\Rnn^s \to \R^m$ that encodes the reaction rates of the $m$ reactions as functions of the $s$ species concentrations (a pair of reversible reactions is counted only once -- in this case, $R_k$ is the forward rate minus the backward rate).  
The {\em reaction kinetics system} 
defined by a reaction network $G$ and reaction rate function $R$ is given by the following system of ODEs:
\begin{align} \label{eq:ODE}
\frac{dx}{dt} ~ = ~ \Gamma \,  R(x)~.
\end{align}
For {\em mass-action kinetics}, which is the setting of this paper, the coordinates of $R$ are:
\begin{equation}\label{eq:R-for-mass-action}
 R_k(x)=\left\lbrace
 \begin{array}{ll}
  \kappa_{ij} x^{y_i} & \textrm{ if $k$ indexes an irreversible reaction $y_i \to y_j$} \\
  \kappa_{ij} x^{y_i} - \kappa_{ji} x^{y_j}  & \textrm{ if $k$ indexes a reversible reaction $y_i \leftrightarrow y_j$} \\
 \end{array}\right. 
\end{equation}

A {\em chemical reaction system} refers to the 
dynamical system (\ref{eq:ODE}) arising from a specific chemical reaction
network $G$ and a choice of rate parameters $(\kappa^*_{ij}) \in
\mathbb{R}^{r}_{>0}$ (recall that $r$ denotes the number of
reactions) where the reaction rate function $R$ is that of mass-action
kinetics~\eqref{eq:R-for-mass-action}.


\begin{example} \label{ex:n=1}
  The following network (called the ``futile cycle'') describes 1-site
  phosphorylation: 
  \begin{equation}
    \label{eq:network_n=1}
    \xymatrix{
      S_0 + K  \ar @<.4ex> @{-^>} [r] ^{k_1} 
      &\ar @{-^>} [l] ^{k_2} S_0 K  \ar [r] 
      ^{k_{3}} 
      & S_1 + K\\ 
      S_1 + F \ar @<.4ex> @{-^>} [r] ^{\ell_3}
      &\ar @{-^>} [l] ^{\ell_{2}} S_{1} F \ar [r] 
      ^{\ell_{1}}
      &S_0 + F
    } 
  \end{equation}
  The key players in this network
  are a kinase ($K$), 
  a phosphatase ($F$), and a substrate ($S_0$).  
  The substrate $S_1$ is obtained from the unphosphorylated protein $S_0$ by 
  attaching a phosphate group to it via an enzymatic reaction catalyzed by $K$. 
  Conversely, a reaction catalyzed by $F$ removes the phosphate group from $S_1$ 
  to obtain $S_0$. The intermediate complexes~$S_0 K$ and~$S_1 F$ are the 
  bound enzyme-substrate complexes.
    Using the variables $x_1,x_2, \ldots, x_6$ to denote the species
    concentrations 
      $K, F,  S_0, S_1, S_0 K, S_1 F$, 
respectively, 
and letting $r_i$ denote the reaction vectors, the \CRS defined by the 1-site phosphorylation network~\eqref{eq:network_n=1} is given by the following ODEs:
    \begin{equation}
      \begin{split}
        \label{eq:odes_n1}
        \frac{dx}{dt} &= k_1\, x_1\, x_3\,
        \underbrace{
          \begin{pmatrix}
            -1 \\ \phantom{-}0 \\ -1 \\ \phantom{-}0 \\ \phantom{-}1\\
            \phantom{-}0 
          \end{pmatrix}
        }_{r_1}
        + k_2\, x_5\,
        \underbrace{
          \begin{pmatrix}
            \phantom{-}1 \\ \phantom{-}0 \\ \phantom{-}1 \\ \phantom{-}0
            \\ -1 \\\phantom{-}0  
          \end{pmatrix} 
        }_{r_2}
        + k_3\, x_5\,
        \underbrace{
          \begin{pmatrix}
            \phantom{-}1 \\ \phantom{-}0 \\ \phantom{-}0\\ \phantom{-}1\\
            -1\\ \phantom{-}0 
          \end{pmatrix}  
        }_{r_3}\\
        &\qquad
        +\ell_3\, x_2\, x_4\,
        \underbrace{ 
          \begin{pmatrix}
            \phantom{-}0 \\ -1 \\ \phantom{-}0 \\ -1 \\\phantom{-}0 \\
            \phantom{-}1 
          \end{pmatrix} 
        }_{r_4}
        + \ell_2\, x_6\,
        \underbrace{
          \begin{pmatrix}
            \phantom{-}0 \\ \phantom{-}1 \\ \phantom{-}0 \\ \phantom{-}1
            \\\phantom{-}0 \\ -1 
          \end{pmatrix} 
        }_{r_5}
        + \ell_1\, x_6\,
        \underbrace{
          \begin{pmatrix}
            \phantom{-}0 \\ \phantom{-}1 \\ \phantom{-}1 \\ \phantom{-}0
            \\\phantom{-}0 \\ -1
          \end{pmatrix} 
        }_{r_6}
      \end{split}
    \end{equation}
To recognize the above ODEs~\eqref{eq:odes_n1} in the general form~\eqref{eq:ODE}, we choose 
for the reversible reactions $S_0+K \leftrightharpoons S_0 K$ and
    $S_1+F\leftrightharpoons S_1 F$, the reactions $S_0+K \to S_0 K$ and
    $S_1+F\to S_1 F$ as forward reactions, and then obtain the following equivalent representation of the ODEs~(\ref{eq:odes_n1}):
    \begin{equation}
      \label{eq:odes_n=1_matrix_form}
      \frac{dx}{dt} = 
      \underbrace{
        \left[
          \begin{array}{*{4}{r}}
            -1 & 1 & 0 & 0 \\
            0 & 0 & -1 & 1\\
            -1 & 0 & 0 & 1 \\
            0 & 1 & -1 & 0 \\
            1 & -1& 0 & 0 \\
            0 & 0 & 1 & -1
          \end{array}
        \right]
      }_{=\Gamma}\,
      \underbrace{
        \begin{pmatrix}
          k_1\, x_1\,x_3 - k_2\, x_5 \\
          k_3\, x_5 \\
          \ell_3\, x_2\, x_4 - \ell_2\, x_6 \\
          \ell_1\, x_6
        \end{pmatrix}
      }_{=R(x)} .
    \end{equation}
    The column vectors of the stoichiometric matrix $\Gamma$ are $r_1$,
    $r_3$, $r_4$, and $r_6$. 
    We will study generalizations of the 
    chemical
    reaction system~\eqref{eq:odes_n=1_matrix_form} in this article. 
\end{example}

The {\em stoichiometric subspace} is the vector subspace of
$\mathbb{R}^s$ spanned by the reaction vectors
$y_j-y_i$ (where $(i,j)$ is an edge of $G$), and we will denote this
space by $\St$: 
\begin{equation} \label{eq:stoic_subs}
  \St~:=~ \mathbb{R} \{ y_j-y_i~|~ (i,j) \in \Edg \}~.
\end{equation}
  Note that in the setting of (\ref{eq:ODE}), one has $\St = \im(\Gamma)$.
In the earlier example reaction shown in~\eqref{reaction_ex}, we have $y_2-y_1 =(2,-1,1)$, which
means 
that with each occurrence of the reaction, two units of $A$ and one of $C$ are
produced, while one unit of $B$ is consumed.  This vector $(2,-1,1)$ spans the
stoichiometric subspace $\St$ for the network~\eqref{reaction_ex}. 
Note that the  vector $\frac{d x}{dt}$ in  (\ref{eq:ODE}) lies in
$\St$ for all time $t$.   
In fact, a trajectory $x(t)$ beginning at a positive vector $x(0)=x^0 \in
\R^s_{>0}$ remains in the {\em stoichiometric compatibility class} (also called an 
``invariant polyhedron''), which we
denote by
\begin{align}\label{eqn:invtPoly}
\invtPoly~:=~(x^0+\St) \cap \mathbb{R}^s_{\geq 0}~, 
\end{align}
for all positive time.  In other words, this set is forward-invariant with
respect to the dynamics~(\ref{eq:ODE}).    
A {\em steady state} of a reaction kinetics system~\eqref{eq:ODE} is a nonnegative concentration vector $x^* \in \Rnn^s$ at which the ODEs~\eqref{eq:ODE}  vanish: $\Gamma R(x^*) = 0$.  
We distinguish between {\em positive steady states} $x^* \in \mathbb{R}^s_{> 0}$ and {\em boundary steady
states} $x^* \in  \left( \mathbb{R}^s_{\geq 0} \setminus \mathbb{R}^s_{> 0} \right)$.  
A system is {\em multistationary} (or {\em admits multiple steady states}) if there exists a stoichiometric
compatibility class $\invtPoly$ with two or more positive steady states. 
In the setting of mass-action kinetics, a network may admit multistationarity for all, some, or no choices of
positive rate constants $\kappa_{ij}$. 

\subsection{Alternate description of chemical reaction systems}
We now give another characterization of the ODEs 
arising from 
mass-action kinetics
that will be useful for obtaining parametrizations of steady states.
  First we introduce the following monomial mapping
  defined by the row vectors of a nonnegative matrix $B\in\R^{p\times
    s}$:
  \begin{equation}
    \begin{split}
      \label{eq:Psi_def}
      \Psi^{(B)} &: \Rnn^s \to \Rnn^p, \\
      \Psi^{(B)}(x) &= ~ \bigl( x^{b_1}, ~ x^{b_2} , ~ \ldots ~ ,  ~ x^{b_p} \bigr)^t~
    \end{split}
  \end{equation}
  Second, recall that $Y$ is the $p \times s$-matrix with rows given by
  the $y_i$'s;
  following (\ref{eq:Psi_def}) these rows define the following
  monomial mapping:
$$ \Psi^{(Y)}(x) ~ = ~ \bigl( x^{y_1}, ~ x^{y_2} , ~ \ldots ~ ,  ~ x^{y_p} \bigr)^t~.$$
%
{Third,} let $A_\kappa$ denote the negative of the {\em
  Laplacian} of the chemical reaction network $G$. In other words,
$A_\kappa$ is the $p \times p$-matrix whose off-diagonal entries 
are the $\kappa_{ij}$ and whose row sums are zero.  
An equivalent characterization of the 
\CRS (\ref{eq:ODE}--\ref{eq:R-for-mass-action}) is
\begin{equation}
  \label{CRN2}
  \frac{dx}{dt} ~=~	Y^t ~  A_\kappa^t ~  \Psi^{(Y)}(x)~.
\end{equation}
That is, after fixing orderings of the species, complexes, and
reactions; the products $\Gamma\, R(x)$ and $Y^t \, A_\kappa^t
\, \Psi^{(Y)}(x)$ evaluate to the same polynomial system:
$    \Gamma\, R(x) = Y^t \, A_\kappa^t\, \Psi^{(Y)}(x).
$

\begin{example} 
  \label{ex:n=1-continued}
  We revisit the 1-site phosphorylation network~\eqref{eq:network_n=1}. 
  Using the ordering of the species given earlier 
(namely, $K, F, S_0, S_1,  S_0 K,  S_1 F$)
and the
  following ordering of the complexes:
  $S_0+K$, $S_0 K$, $S_1+K$, $S_1+F$, $S_1F$, $S_0+F$,
  the alternate description~\eqref{CRN2} of the chemical reaction
  system~\eqref{eq:odes_n=1_matrix_form} arises as the product of the
  following: 
  \begin{equation*}
    \Psi^{(Y)}(x) ~=~ \left(  x_1 x_3,~ x_5, ~x_1 x_4, ~x_2 x_4, ~x_6,
      ~x_2 x_3 \right)^t\ , 
  \end{equation*}
  \begin{equation*}
    Y^t ~=~ \left[  
      \begin{array}{llllllllll}
        1 & 0 & 1 & 0 & 0 & 0 \\ 
        0 & 0 & 0 & 1 & 0 & 1 \\
        1 & 0 & 0 & 0 & 0 & 1 \\
        0 & 0 & 1 & 1 & 0 & 0 \\
        0 & 1 & 0 & 0 & 0 & 0 \\
        0 & 0 & 0 & 0 & 1 & 0 
      \end{array}
    \right], {\rm ~ and}
  \end{equation*}
  \begin{equation*}
    A^t_{\kappa} ~:=~ 
    \left[  
      \begin{array}{*{3}{c}|*{3}{c}}
        -k_1 & k_2 & 0 & & &  \\
        k_1 & -(k_2+k_3) & 0 & & 0 & \\
        0 & k_3 & 0 & & & \\ \hline
        & & & -\ell_3 & \ell_2 & 0 \\
        & 0 & & \ell_3 & -(\ell_2+\ell_1) & 0 \\
        & & & 0 & \ell_1 & 0
      \end{array}
    \right].
  \end{equation*}
\end{example}

\subsection{Translated chemical reaction networks}
\label{sec:translatedRN}

Recall that the reaction vector encodes the net change in each species
when a given reaction takes place. If the same amount of some chemical
species is added to both the product and the educt complex of this
reaction, then the reaction vector is unchanged. For example, the
reactions $A+B \to 3\, A +C$ and $2A+B \to 4\, A +C $ both have the
reaction vector $(2,-1,1)^t$. Thus, if both reactions are assigned the
reaction rate function  $v=\kappa\, x_A\, x_B$, then both reactions
define the same dynamical system~\eqref{eq:ODE}.
To exploit this observation, 
Johnston introduced the notion of a \emph{translated chemical reaction
  network} in \cite{translated}: a translated chemical reaction
network is a reaction network obtained by adding to the product and
educt of each reaction the same amounts of certain species. 

Obviously, a given reaction network generates infinitely many
translated networks.  Here, we are interested only in those for
which 
the original network  
and its translation 
define the same dynamical system.
Translated networks for which this is possible include those
that are weakly reversible and for which there exists a
reaction-preserving bijection between the educt complexes in the
original network and those of the translated network
\cite[Lemma~4.1]{translated}. 
  In \cite{translated} such a weakly reversible translation is
  called proper.


  Next we consider a reaction
  network with matrices $A_\kappa$ and $Y$, and let $\tilde A_\kappa$ and
  $\tilde Y$ be the corresponding matrices defined by its proper,
  weakly reversible translated network.
  By \cite[Lemma~4.1]{translated}, there exists a matrix ${\mathcal Y}$
  such that the \CRS defined by the translation (taken with the
  monomial function $\Psi^{({\mathcal Y})}$)
  is identical to the \CRS defined by
  the original network (taken with 
  $\Psi^{(Y)}$), where the rate constants are taken to be the same:
\begin{equation}
  \label{eq:stron_proper}
  \tilde Y^t\, \tilde A^t_\kappa\, \Psi^{({\mathcal{Y}})}(x) = Y^t\,
  A^t_\kappa\, \Psi^{(Y)}(x) ~.
\end{equation}
For completeness, this entails 
$ 
 \tilde Y^t\, \tilde A^t_\kappa\, \Psi^{(\mathcal{Y})}(x) = \Gamma \, R(x), 
$
where $\Gamma$ is the stoichiometric matrix and $R$ the mass-action
rate function of the original network. 
In Section~\ref{sec:augment}, we will establish 
proper, weakly reversible
translations for the generalizations of
network~(\ref{eq:network_n=1}) described in
Section~\ref{sec:processive}.  And in Section~\ref{sec:steady_state}
we will obtain parametrizations of steady states based on these
translations. 
\section{Sequential and processive phosphorylation/dephosphorylation at $n$ sites} \label{sec:processive}
This section introduces a generalization of the 1-site phosphorylation network~\eqref{eq:network_n=1} to an $n$-site network.  In nature, an enzyme may facilitate the (de)phosphorylation of a substrate at $n$ sites by a processive or distributive mechanism.  Our work focuses on the processive mechanism; a comparison with the distributive mechanism appears in Subsection~\ref{sec:comparison_dist}.  

\subsection{Description of the processive $n$-site network}
Here is the reaction network that describes the sequential\footnote{In {\em sequential} (de)phosphorylation, phosphate groups are added or removed in a prescribed order.} and processive phosphorylation/dephosphorylation of a substrate at $n$ sites, which we call the {\em processive $n$-site network} in this paper: 
\begin{equation}
  \label{eq:network_processive}
  \begin{split}
  \xymatrix{
    S_0 + K  \ar @<.4ex> @{-^>} [r] ^-{k_1}
    &\ar @{-^>} [l] ^-{k_{2}} S_0 K \ar @<.4ex> @{-^>} [r] ^-{k_3}
    &\ar @{-^>} [l] ^-{k_{4}} S_1 K \ar @<.4ex> @{-^>} [r] ^-{k_{5}}
    &\ar @{-^>} [l] ^-{k_{6}} \hdots \ar @<.4ex> @{-^>} [r] ^-{k_{2n-1}}
    &\ar @{-^>} [l] ^-{k_{2n}} S_{n-1} K \ar [r] 
    ^-{k_{2n+1}}
    & S_n + K\\
    S_n + F \ar @<.4ex> @{-^>} [r] ^-{\ell_{2n+1}}
    &\ar @{-^>} [l] ^-{\ell_{2n}} S_{n} F \ar @<.4ex> @{-^>} [r]
    ^-{\ell_{2n-1}} 
    &\ar @{-^>} [l] ^-{\ell_{2n-2}} \hdots \ar @<.4ex> @{-^>} [r]
    ^-{\ell_5}
    &\ar @{-^>} [l] ^-{\ell_4} S_2 F \ar @<.4ex> @{-^>} [r]
    ^-{\ell_3}
    &\ar @{-^>} [l] ^-{\ell_2} S_1 F \ar [r]
    ^-{\ell_1}
    &S_0 + F
  }
  \end{split}
\end{equation}

We see that the substrate undergoes $n>1$
phosphorylations after binding to the kinase and forming the
enzyme-substrate complex; thus, only the fully
phosphorylated substrate is released and hence only two phosphoforms
have to be considered: the unphosphorylated substrate $S_0$ and fully phosphorylated
substrate $S_n$ (see,
for example, \cite{Markevich04,sig-041}). 
Processive dephosphorylation proceeds similarly.
%
The enzyme-substrate complex formed by the kinase (or phosphatase, respectively)
 and the substrate with $i$ phosphate groups attached is denoted by $S_i K$ (or $S_i F$, respectively).
%

\begin{table}
  \centering
  \begin{tabular}{*{11}{>{$}c<{$}}}\hline
    x_1 & x_2 & x_3 & x_4 & x_5 & x_6 & x_7 & x_8 & \cdots & x_{2n+3}
    & x_{2n+4} \\ \hline
    K & F & S_0 & S_n & S_0 K & S_1 F & S_1 K & S_2 F &
    \cdots & S_{n-1} K & S_n F\\ \hline
  \end{tabular}
  \caption{Assignment of variables and species of the processive $n$-site network~\eqref{eq:network_processive}}
  \label{tab:variables}
\end{table}

Letting $e_i \in \mathbb{R}^{2n+4}$ denote the $i$-th standard basis vector, the 
$(2n+4) \times (2n+2)$ stoichiometric matrix for the $n$-site processive network~\eqref{eq:network_processive}
is the following, where the rows are indexed by the $2n+4$ species in the order presented in Table~\ref{tab:variables} and the 
columns correspond to the (forward) reactions in the order given by 
$(k_1, k_3, \dots, k_{2n+1}, ~ l_{2n+1}, l_{2n-1}, \dots, l_1)$:
\begin{equation}
  \begin{split}
    \label{eq:def_Gamma_proc}
    \Gamma &= \bigl[
    e_5 - (e_1+e_3) 
	\quad \lvert \quad  
     \ldots,\, e_{2i+5} - e_{2i+3},\, \ldots 
	\quad \lvert \quad
    e_4+e_1 - e_{2n+3},  \\ 
    & \quad \quad \qquad e_2+{ e_{3}} - e_6
    	\quad \lvert \quad  
     \ldots ,~
    e_{{2i+4}}-e_{{2i+6}},\, \ldots
    	\quad \lvert \quad  
    e_{2n+4}-(e_2+e_4) 
    \bigr]
  \end{split}
\end{equation}
where $i=1$, \ldots, $n-1$. The reaction rate function arising from mass-action kinetics~\eqref{eq:R-for-mass-action} is:
\begin{equation}
    \label{eq:def_R_proc}
    R(x) = 	
    \left[  
    \begin{array}{c}
    k_1 x_1 x_3 - k_2 x_5 \\
    \hline
	k_3 x_5 - k_4 x_7 \\ 	
	k_5 x_7 - k_6 x_9 \\ 	
	\vdots \\
	k_{2n-1} x_{2n+1} - k_{2n} x_{2n+3} \\ 	
	\hline
    k_{2n+1} x_{2n+3}\\
	\hline
	\ell_{2n+1} x_2 x_4 - \ell_{2n} x_{2n+4} \\
	\hline
	\ell_{2n-1} x_{2n+4} - \ell_{2n-2} x_{2n+2} \\
	\ell_{2n-3} x_{2n+2} - \ell_{2n-4} x_{2n} \\
	\vdots \\
	\ell_{3} x_{8} - \ell_{2} x_{6} \\	
	\hline
	\ell_{1} x_6
    \end{array}
	\right]~.
\end{equation}
For $n=1$, the matrices~(\ref{eq:def_Gamma_proc}--\ref{eq:def_R_proc})
appeared earlier in~\eqref{eq:odes_n=1_matrix_form}.

\begin{remark} \label{rmk:proc_net_vs_others}
  The processive multisite network~\eqref{eq:network_processive} 
  is consistent with the one presented in~\cite{ConradiUsing}, but 
  differs somewhat from the ones given in 
  \cite[Figure~1]{PM} and~\cite[Equation~(7)]{Guna}.  
\end{remark}

  Next we consider the rank of the matrix $\Gamma \in \mathbb{R}^{(2n+4)\times (2n+2)}$:
  \begin{lemma}\label{lem:rank-gamma}
    The matrix $\Gamma$ from (\ref{eq:def_Gamma_proc}) has rank 2n+1.
  \end{lemma}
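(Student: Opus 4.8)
The plan is to pin down the rank by matching upper and lower bounds. For the upper bound $\rank \Gamma \le 2n+1$, I would exhibit three independent vectors in the left kernel of $\Gamma$, i.e.\ three conservation laws: the totals of kinase, phosphatase, and substrate. In the species ordering of Table~\ref{tab:variables} these are $e_1 + e_5 + e_7 + \cdots + e_{2n+3}$, $\ e_2 + e_6 + e_8 + \cdots + e_{2n+4}$, and $e_3 + e_4 + e_5 + \cdots + e_{2n+4}$. A direct check against each column of~\eqref{eq:def_Gamma_proc} shows each is orthogonal to every reaction vector: the internal reactions shuffle a species within one pool, while each release or binding reaction trades a free enzyme plus free substrate for a complex, leaving all three totals fixed. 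Their supports make independence immediate (the first two are supported on disjoint odd- and even-indexed coordinates, and only the third charges $e_3$ and $e_4$), so $\dim\ker\Gamma^t \ge 3$ and hence $\rank \Gamma \le (2n+4)-3 = 2n+1$.

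For the matching lower bound I would prove that the first $2n+1$ columns of~\eqref{eq:def_Gamma_proc} (all but the last) are linearly independent. The key device is to project onto the $2n$ ``complex'' coordinates, the species $S_iK$ and $S_iF$ (indices $5$ through $2n+4$), discarding the four free-species rows $K,F,S_0,S_n$. Under this projection the kinase reactions land in the odd coordinates $e_5,e_7,\ldots,e_{2n+3}$ and the phosphatase reactions in the even coordinates $e_6,e_8,\ldots,e_{2n+4}$, so the two blocks decouple. Each block is bidiagonal: the $n$ retained phosphatase columns project to $-e_6,\ e_6-e_8,\ \ldots,\ e_{2n+2}-e_{2n+4}$, an upper-triangular and hence invertible system, forcing their coefficients to vanish; the $n+1$ kinase columns project to $e_5,\ e_7-e_5,\ \ldots,\ e_{2n+3}-e_{2n+1},\ -e_{2n+3}$, which span the $n$-dimensional odd block subject to a single relation, namely that they telescope to zero. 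Thus in any dependency the kinase coefficients must all equal a common scalar $c$.

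The one thing the complex coordinates cannot detect is this scalar $c$, and supplying it is the only genuine obstacle: the projection by itself yields merely $\rank \Gamma \ge 2n$. To finish I would return to the full (unprojected) vectors and compute the telescoping sum of all kinase columns, which collapses to $e_4 - e_3$, the net conversion $S_0 \to S_n$ recorded in the free-substrate coordinates. Since $c(e_4-e_3)$ vanishes only for $c=0$, every coefficient is forced to zero, the $2n+1$ columns are independent, and $\rank \Gamma \ge 2n+1$. Together with the upper bound this gives $\rank \Gamma = 2n+1$.
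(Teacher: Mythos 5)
Your proof is correct, but it takes a genuinely different route from the paper's in both halves. For the upper bound, the paper observes that every row of $\Gamma$ sums to zero (each species occurs with coefficient one as educt of exactly one forward reaction and as product of exactly one), i.e.\ the all-ones vector lies in the \emph{right} kernel, so the $2n+2$ columns are dependent and $\rank \Gamma \le 2n+1$; you instead exhibit three independent vectors in the \emph{left} kernel. Your version costs a bit more verification but does double duty: it directly establishes the conservation laws that the paper only records afterwards, in Remark~\ref{rmk:cons_laws}, as a consequence of the lemma (note the paper's logical order is lemma first, conservation laws second, whereas you prove the left-kernel statement outright --- which is fine, since you verify orthogonality to each column directly rather than citing the remark). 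For the lower bound, the paper's argument is slicker: after reordering the rows as $(S_0, S_0K, S_1K, \dots, S_{n-1}K, S_n, S_nF, \dots, S_2F)$, the relevant $(2n+1)\times(2n+1)$ submatrix is lower-triangular with $-1$'s on the diagonal, so independence of $2n+1$ columns follows in one stroke. Your projection-onto-complex-coordinates argument is correct --- the decoupling into odd/even blocks, the invertible bidiagonal phosphatase block, the one-dimensional telescoping kernel of the kinase block, and the final lift back to $c(e_4 - e_3) = 0$ all check out --- but it needs the extra ``return to the full vectors'' step precisely because the projection loses one dimension of information, an inefficiency the paper's choice of rows avoids by keeping $S_0$ and $S_n$ in play from the start. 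Both proofs are elementary linear algebra; yours trades brevity for the bonus of having verified the conservation laws explicitly.
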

  \begin{proof}
    It is easy to see that the row sums of $\Gamma$ are zero (so the
    rank is at most $2n+1$) because each species appears with
    stoichiometric coefficient one in the educt (reactant) of exactly
    one reaction (in the forward direction) and similarly in the
    product of exactly one reaction. Also, after reordering the rows
    so that the first $2n+1$ rows are indexed by the species as
    follows:  
    $(S_0, S_0 K, S_1 K, \dots, S_{n-1} K, S_n, S_n F, S_{n-1} F,
    \dots, S_2 F)$, the upper $(2n+1) \times (2n+1)$-submatrix has
    full rank. Indeed, this submatrix is lower-triangular with $-1$'s
    along the diagonal; this holds because when considering only the
    first $2n+1$ species, reaction 1 involves only $S_0$ as educt
    (reactant) and $S_0K$ as product (corresponding to rows 1 and 2,
    respectively), reaction 2 involves only $S_0K$ and $S_1K$ (rows 2
    and 3), and so on, with reaction $2n$ involving only rows $2n$ and
    $2n+1$ and reaction $2n+1$ involving only row $2n+1$.
  \end{proof}

\begin{remark}[Conservation relations]\label{rmk:cons_laws} \mbox{} 
  By Lemma~\ref{lem:rank-gamma}, 
    $\ker(\Gamma^t) = \mathcal{S}^{\perp}$ is three-dimensional.
    A particular basis is formed by the rows of the following matrix:
    \begin{align} \label{eq:matrix_A}
      \mathcal{A} = \left[
        \begin{array}{rr|rr|rrrrrrr}
          1 & 0 & 0 & 0 & 1 & 0 & 1 & 0 & \cdots & 1 & 0 \\
          0 & 1 & 0 & 0 & 0 & 1 & 0 & 1 & \cdots & 0 & 1 \\
          0 & 0 & 1 & 1 & 1 & 1 & 1 & 1 & \cdots & 1 & 1 \\
        \end{array}
      \right].
    \end{align}
   This basis has the following interpretation: the total amounts
    of free and bound enzyme or substrate remain constant as the dynamical system~\eqref{eq:ODE} progresses. In other words, 
the rows of $\mathcal{A}$ correspond to the following conserved (positive) quantities    
(recall the species ordering from Table~\ref{tab:variables}):
      \begin{align*}
        \label{eqn:conservation}
        K_{\mbox{tot}} &= x_1 + (x_5+x_7+ \cdots + x_{2n+3}), \\
        F_{\mbox{tot}} &= x_2 + (x_6+ x_8 + \cdots + x_{2n+4})   ~, \\
        S_{\mbox{tot}} &= x_3+x_4+\cdots +x_{2n+4}~.
      \end{align*}
\end{remark}

From the conservation relations, we establish that no boundary steady states exist, 
by a straightforward generalization of the analysis due
  to Angeli, De~Leenheer, and Sontag in~\cite[\S~6,  Ex.\ 1--2]{ADS07}.
  \begin{lemma}\label{lem:nobss}
  Let $x^* \in \R_{\ge 0}^{2n+4} - \R_{>0}^{2n+4}$ be a boundary steady
    state. Set $\Lambda:=\{ i \in \{1, \dots, 2n+4\} : x^*_i=0\}$. Then,  
    $\Lambda$ contains the support of at least one of the vectors
    defining the conservation relations~\eqref{eq:matrix_A}.
    Thus, there are no boundary steady states in any stoichiometric
    compatibility class. 
  \end{lemma}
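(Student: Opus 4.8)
The plan is to establish two facts and then combine them. First, that the zero-set $\Lambda$ of a boundary steady state $x^*$ is a \emph{siphon}: a set of species such that every reaction producing a species in the set also consumes (has as reactant) a species in the set. Second, that every nonempty siphon of the processive $n$-site network~\eqref{eq:network_processive} contains the support of one of the three conservation relations in~\eqref{eq:matrix_A}. The concluding sentence then follows immediately: if, say, $\Lambda \supseteq \{x_1,x_5,x_7,\dots,x_{2n+3}\}$, then $K_{\mbox{tot}}(x^*)=0$; but $K_{\mbox{tot}}$ is constant along~\eqref{eq:ODE} and strictly positive on $\R^{2n+4}_{>0}$, so $x^*$ cannot lie in any class $(x^0+\St)\cap\R^{2n+4}_{\ge 0}$ with $x^0 \in \R^{2n+4}_{>0}$, and symmetrically for the other two relations.

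To see that $\Lambda$ is a siphon, I would argue directly from mass-action kinetics. Fix $i \in \Lambda$, so that $x^*_i=0$. In the $i$-th component of~\eqref{eq:ODE}, every reaction that consumes species $i$ contributes a rate containing the factor $x^*_i=0$, so the value of $\frac{dx_i}{dt}$ at $x^*$ reduces to a nonnegative sum over the reactions that \emph{produce} species $i$. Since $x^*$ is a steady state this sum vanishes, which forces each producing reaction to have rate zero, i.e.\ to have at least one reactant whose concentration is zero — a species in $\Lambda$. This is exactly the siphon condition, and $\Lambda\neq\emptyset$ because $x^*$ is a boundary point.

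For the second fact I would split into three cases according to whether $K$ or $F$ lies in $\Lambda$. If $K\in\Lambda$, then the reaction $S_0K\to S_0+K$ forces $S_0K\in\Lambda$, and then the backward reversible steps $S_{i+1}K\to S_iK$ successively force $S_1K,\dots,S_{n-1}K\in\Lambda$, yielding the support of row~$1$. The case $F\in\Lambda$ is symmetric and gives row~$2$. In the remaining case $K,F\notin\Lambda$, the set $\Lambda$ still contains at least one substrate form. Here the siphon condition reads: whenever a substrate form $P\in\Lambda$, every reaction producing $P$ must place its \emph{substrate} reactant in $\Lambda$ (the enzyme reactant, if present, is excluded). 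Reading these ``back-propagation'' implications $P\rightsquigarrow P'$ off the cycle $S_0\to S_0K\to\cdots\to S_{n-1}K\to S_n\to S_nF\to\cdots\to S_1F\to S_0$, I would check that the resulting digraph on the $2n+2$ substrate forms is strongly connected, so $\Lambda$ must contain all of them — the support of row~$3$.

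The main obstacle is precisely this strong-connectivity check for general $n$. The two irreversible release steps $S_{n-1}K\to S_n+K$ and $S_1F\to S_0+F$ make the corresponding back-propagation arrows one-directional, so e.g.\ from $S_{n-1}K$ one cannot reach $S_n$ along that edge. The resolution is that the reversible steps render each of the kinase block $\{S_0,\dots,S_{n-1}K\}$ and the phosphatase block $\{S_n,\dots,S_1F\}$ internally strongly connected, while the arrows $S_n\rightsquigarrow S_{n-1}K$ (from the first release) and $S_0\rightsquigarrow S_1F$ (from the second) connect the two blocks in both directions. Verifying that these implications close up to \emph{every} substrate form — carefully tracking the four endpoints $S_{n-1}K,S_n,S_1F,S_0$ where reversibility fails — is the only genuinely delicate part; everything else is routine bookkeeping.
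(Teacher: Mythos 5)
Your proof is correct, and it takes essentially the same route as the paper: the paper gives no explicit proof but appeals to a ``straightforward generalization of the analysis due to Angeli, De Leenheer, and Sontag,'' which is exactly the siphon argument you reconstruct (the zero-set of a boundary steady state is a siphon, and every nonempty siphon of the processive network contains the support of one of the conservation relations in~\eqref{eq:matrix_A}). Your case analysis, including the strong-connectivity check of the back-propagation digraph linking the kinase and phosphatase blocks through $S_n \rightsquigarrow S_{n-1}K$ and $S_0 \rightsquigarrow S_1F$, fills in precisely the details the paper leaves to the cited reference.
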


\begin{remark}[Existence of steady states via the Brouwer fixed-point theorem] \label{rmk:brouwer}
The aim of this paper is to analyze the chemical reaction systems arising from the $n$-site phosphorylation network (for all $n$ and all choices of rate constants), that is, the dynamical system $\frac{dx}{dt} = \Gamma \, R(x)$, where $\Gamma$ and $R(x)$ are given in~(\ref{eq:def_Gamma_proc}--\ref{eq:def_R_proc}).  We will show that the steady states admit a monomial parametrization, each stoichiometric compatibility class has a unique steady state, and this steady state is a global attractor.  As a first step, the existence of at least one steady state in each compatibility class is guaranteed by the Brouwer fixed-point theorem (for details, see \cite[Remark~3.9]{TSS}); indeed, the compatibility classes are compact because of the conservation laws (Remark~\ref{rmk:cons_laws}) and there are no boundary steady states 
(Lemma~\ref{lem:nobss}). 
Therefore, to show that a unique steady state exists in each compatibility class, it suffices to preclude multistationarity.  This will be accomplished in Section~\ref{sec:steady_state}.
\end{remark}


\subsection{Comparison with distributive multisite systems} \label{sec:comparison_dist}
Here we describe, for comparison, the {\em distributive} multisite phosphorylation networks and what is known about their dynamics. Phosphorylation/dephosphorylation is distributive when the binding of a substrate and an enzyme results in at most one addition or removal of a phosphate group.  
The {\em distributive $n$-site network} describes the sequential and distributive phosphorylation/dephosphorylation of a substrate at $n$ sites:
\begin{equation}
  \label{eq:network_distributive}
  \begin{split}
  \xymatrix@R=0.5em@C=1.25em{
      S_0 + K  \ar @<.4ex> @{-^>} [r]
    &\ar @{-^>} [l] S_0 K \ar [r]
    & S_1 + K \ar @<.4ex> @{-^>} [r]
    &\ar @{-^>} [l] S_1 K \ar [r]
    &\hdots \ar @{->} [r]
    & S_{n-1} + K \ar @<.4ex> @{-^>} [r]
    &\ar @{-^>} [l] S_{n-1} K \ar [r]
    & S_n + K\\
      S_n + F  \ar @<.4ex> @{-^>} [r]
    &\ar @{-^>} [l] S_n F \ar [r]
    &\hdots \ar @{->} [r]
    & S_2 + F \ar @<.4ex> @{-^>} [r]
    &\ar @{-^>} [l] S_2 F \ar [r]
    & S_1 + F \ar @<.4ex> @{-^>} [r]
    &\ar @{-^>} [l] S_1 F \ar [r]
    & S_0 + F\\
  }
  \end{split}
\end{equation}

  For any $n\geq 2$, there exist rate constants such that the \CRS arising
from the distributive $n$-site network~\eqref{eq:network_distributive}
admits multiple steady states~\cite{FHC,bistable,TG2,WangSontag}. These rate
constants arise from the solutions of the linear inequality systems
described in \cite{KathaMulti}.
One goal of this work is to highlight the differences between
distributive systems and processive systems. In particular, as we will
see, processive systems are not multistationary: their steady states
are unique and global attractors (Theorem~\ref{thm:global_conv}).
Indeed, this confirms mathematically the observation in~\cite[\S
5]{PM} that distributive phosphorylation can be switch-like, while
processive phosphorylation is not.

Both distributive and processive systems have {\em toric steady
  states}: the set of steady states is cut out by binomials, which
then gives rise to a monomial parametrization of the steady states.
This was shown for distributive systems by P\'erez Mill\'an {\em et
  al.}~\cite[\S 4]{TSS}.  For processive systems, this will be
accomplished in Section~\ref{sec:steady_state}.

\section{Translated version of the processive network} \label{sec:augment}
Here we present a translated version of the processive
$n$-site network which will aid in our analysis of 
the steady states of the original
network (cf.\ Section~\ref{sec:translatedRN}).  
This network is obtained from the processive $n$-site network~\eqref{eq:network_processive} by adding $F$ to every complex of the first connected component and adding $K$ to every complex of
the second connected component:
\begin{equation}
  \label{eq:augmented_network_proc}
  \xymatrix@C=3ex{
    S_0 + K  +F \ar @<.4ex> @{-^>} [r] ^-{k_1}
    &\ar @{-^>} [l] ^-{k_{2}} S_0 K +F \ar @<.4ex> @{-^>} [r] ^{k_3}
    &\ar @{-^>} [l] ^{k_{4}} S_1 K +F \ar @<.4ex> @{-^>} [r] ^-{k_{5}}
    &\ar @{-^>} [l] ^-{k_{6}} \phantom{K}\cdots \phantom{K} \ar @<.4ex>
    @{-^>} [r] ^-{k_{2n-3}}
    &\ar @{-^>} [l] ^-{k_{2n-2}} S_{n-2} K +F \ar @<.4ex> @{-^>} [r]
     ^{k_{2n-1}}
    &\ar @{-^>} [l] ^{k_{2n}} S_{n-1} K +F \ar [d] 
    ^{k_{2n+1}} \\
    \ar [u]  ^{\ell_1} S_1 F + K \ar @<.4ex> @{-^>} [r] ^{\ell_2}
    & \ar @{-^>} [l] ^{\ell_3} S_2 F + K \ar @<.4ex> @{-^>} [r]
    ^-{\ell_4}
    &\ar @{-^>} [l] ^-{\ell_5} \phantom{K}\cdots\phantom{K}
    \ar @<.4ex> @{-^>} [r] ^-{\ell_{2n-4}}
    &\ar @{-^>} [l] ^-{\ell_{2n-3}} S_{n-1} F + K  \ar @<.4ex> @{-^>}
    [r] ^-{\ell_{2n-2}}
    &\ar @{-^>} [l] ^-{\ell_{2n-1}}  S_{n} F+K \ar @<.4ex> @{-^>} [r]
    ^-{\ell_{2n}} 
    &\ar @{-^>} [l] ^-{\ell_{2n+1}} S_n + K + F
  }
\end{equation}

Consisting of a single strongly connected component, our translated
network~\eqref{eq:augmented_network_proc} is therefore weakly reversible.  
Our subsequent arguments generalize the analysis of the 1-site network
by Johnston~\cite[Example I]{translated} and fits in the setting of
Theorem 4.1 in that work.

\begin{table}
  \centering
  \begin{tabular}{*{4}{>{$}c<{$}}}\hline
    \text{Complex} & 	\text{Corresponding vector } \tilde y_i
    & \text{Educt complex in
      (\ref{eq:network_processive})} &  \text{Corresponding vector } y_i \\ \hline \hline
    S_0 + K + F 	&     \tilde y_1 = e_1 + e_2 + e_3	&
    S_0+K  & y_1 = e_1 +e_3 \\ \hline
    \vdots & \vdots &  \vdots & \vdots \\
    S_i K +F &     \tilde y_{i+2} = e_2 + e_{2i+5}  & S_i K & y_{i+2}
    = e_{2i+5} \\ 
    \vdots & \vdots & \vdots & \vdots  \\ \hline
    S_n + K + F & \tilde y_{n+2} = e_1 + e_2 + e_4 & S_n + F &
    y_{n+2} = e_2 +e_4 \\ \hline
    \vdots & \vdots & \vdots & \vdots  \\
    S_{n-i} F + K & \tilde y_{n+i+3} = e_1 + {e_{2n+4-2i}} &
    S_{n-i} F & y_{n+i+3} = e_{2n+4-2i}  \\ 
    \vdots & \vdots & \vdots & \vdots \\ \hline	
  \end{tabular}
  \caption{Column 1: the complexes of the {translated}
    network~(\ref{eq:augmented_network_proc}); column 2: the  corresponding
    vectors $\tilde y_i$ (via the species ordering in
    Table~\ref{tab:variables}); column 3: the unique corresponding educt complexes of the (original) processive $n$-site network (\ref{eq:network_processive}); column 4: the corresponding
    vectors $y_i$. 
    The index $i$ runs over $0\leq i\leq n-1$.
  }
  \label{tab:complexes_aug_proc}
\end{table}
Following the ideas introduced in Section~\ref{sec:translatedRN}, we
establish in Table~\ref{tab:complexes_aug_proc} (columns 1 and 3) a
reaction-preserving bijection between educt complexes of the original
processive network~(\ref{eq:network_processive}) and those of its
translation~(\ref{eq:augmented_network_proc}). Hence, as explained in
Section~\ref{sec:translatedRN}, the
translation~(\ref{eq:augmented_network_proc}) is weakly reversible and
proper. Columns 2 and 4 of Table~\ref{tab:complexes_aug_proc} give the
vectors $\tilde y_i$ of the translation together with the
corresponding vectors $y_i$ of the original network. These vectors
define matrices $\tilde Y$ and  $\mathcal{Y}$:
\begin{align} \label{eq:two-matrices-Y}
    \tilde Y = \left[
      \begin{array}{c}
        \tilde y_1 \\ \vdots \\ \tilde y_{2n+2}
      \end{array}
    \right]\; \text{and } \quad
    \mathcal{Y} = \left[
      \begin{array}{c}
        y_1 \\ \vdots \\ y_{2n+2}
      \end{array}
    \right].
\end{align}
The matrix $\mathcal{Y}$ defines the monomial vector 
  \begin{equation}
    \label{eq:psi_processive}
    \Psi^{(\mathcal{Y})}(x) = \left( x_1 x_3  \,| \,  x_5,\, x_7,\, 
      \ldots, \, x_{2n+3} \, | \,
      x_2 x_4 \, | \, 
      x_{2n+4},\, x_{2n+2}, \,  \ldots, 
      \, x_6\right)^t ~.
  \end{equation}
Also, the matrix $\tilde A_\kappa^t \in \R^{(2n+2)\times (2n+2)}$ for the
translated network~(\ref{eq:augmented_network_proc}) is:
\begin{equation}
  \label{eq:def_Ak_aug}
  \arraycolsep=3pt 
  \medmuskip = 1mu 
  \tilde A_\kappa^t = 
  \scalebox{.55}{
    \mbox{
      \ensuremath{
        \begin{blockarray}{cccccccccccccc}
          & \scriptstyle 1 &\scriptstyle 2 &\scriptstyle 3 &\scriptstyle 4
          &\scriptstyle \hdots &\scriptstyle n+1 &\scriptstyle n+2
          &\scriptstyle n+3 &\scriptstyle n+4 &\scriptstyle n+5
          &\scriptstyle \hdots &\scriptstyle 2n+1 &\scriptstyle 2n+2 \\
          \begin{block}{>{\scriptstyle}c(ccccccccccccc)}
            1 & -k_1 & k_2 & & & & & & & & & & & l_1 \\
            2 & k_1 & -(k_2+k_3) & k_4 & & & & & & & & & & & \\
            3 & & k_3 & -(k_4+k_5) & k_6 & & & & & & & & &  \\
            4 & & & k_5 & -(k_6+k_7) & & & & & & & & & \\
            5 & & & & k_7 & & & & & & & & \\
            \vdots & & & & & \ddots & & & & & & & & \\
            n & & & & & & k_{2n} & & & & & & \\
            n+1 & & & & & & -(k_{2n} + k_{2n+1}) & & & & & & & \\
            n+2 & & & & & & k_{2n+1} & -\ell_{2n+1} & \ell_{2n} & & & & & \\
            n+3 & & & & & & & \ell_{2n+1} & -(\ell_{2n-1} + \ell_{2n}) & \ell_{2n-2} & & & &
            \\
            n+4 & & & & & & & & \ell_{2n-1} & -(\ell_{2n-3} + \ell_{2n-2}) & \ell_{2n-4} &
            & & \\
            n+5 & & & & & & & & &  \ell_{2n-3} & -(\ell_{2n-5} + \ell_{2n-4}) & & &
            \\
            n+6 & & & & & & & & &  & \ell_{2n-5} & & & \\
            \vdots & & & & & & & & & & & \ddots & & \\
            2n     & & & & & & & & & & & & \ell_{4} & \\
            2n+1   & & & & & & & & & & & & -(\ell_{3} + \ell_{4})
            & \ell_{2} \\
            2n+2 & & & & & & &  & & & & & \ell_{3} &
            -(\ell_{1} + \ell_{2}) \\
          \end{block}
        \end{blockarray}
      }
    }
  }
\end{equation}
As explained earlier, it follows that the \CRS defined by network~(\ref{eq:network_processive}) and the
  generalized \CRS defined by the
  translation~(\ref{eq:augmented_network_proc}) via the matrix
  $\mathcal{Y}$ 
  are identical~ \cite[Lemma~4.1]{translated}. That is, either system is defined by the following system of ODEs:
  \begin{equation}
    \label{eq:ss_via_translation}
    \frac{dx}{dt} 
    ~= ~
    \tilde Y^t\, \tilde A^t_\kappa\, \Psi^{(\mathcal{Y})}(x)
    ~=~ \Gamma \, R(x), 
  \end{equation}
  where 
  the matrix $\tilde Y$ is given in~\eqref{eq:two-matrices-Y} (via
  Table~\ref{tab:complexes_aug_proc}),  
 $\Psi^{(\mathcal{Y})}$ and  $\tilde A_\kappa$
are given in (\ref{eq:psi_processive}--\ref{eq:def_Ak_aug}), and
  the matrix $\Gamma$ and the function $R(x)$ arise from from the original network via mass-action kinetics and are given in
  (\ref{eq:def_Gamma_proc}--\ref{eq:def_R_proc}), respectively. 

We now analyze the matrix $\tilde Y \in\R^{(2n+2)\times (2n+4)}$ for
the translated network~\eqref{eq:augmented_network_proc}:
%
\begin{lemma}
  \label{lem:rank_Y_aug}
  The matrix $\tilde Y \in \R^{(2n+2) \times (2n+4)}$ for the
  {translated} network~\eqref{eq:augmented_network_proc}
  has
  full rank $2n+2$ and hence $\ker(\tilde Y^t)=0$.
\end{lemma}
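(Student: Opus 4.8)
The plan is to exploit the sparse, almost-diagonal structure of $\tilde Y$ that is already visible in Table~\ref{tab:complexes_aug_proc}. First I would recall that the rows of $\tilde Y$ are the vectors $\tilde y_1, \dots, \tilde y_{2n+2}$ recorded there, and that (reading off the species ordering from Table~\ref{tab:variables}) every species other than the two free enzymes $K = e_1$ and $F = e_2$ occurs in exactly one complex of the translated network~\eqref{eq:augmented_network_proc}. Concretely, $S_0 = e_3$ appears only in $\tilde y_1$, the complex species $S_iK = e_{2i+5}$ appears only in $\tilde y_{i+2}$, the fully phosphorylated substrate $S_n = e_4$ appears only in $\tilde y_{n+2}$, and $S_{n-i}F = e_{2n+4-2i}$ appears only in $\tilde y_{n+i+3}$. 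In other words, each of these species is a ``private'' column occurring in a single row.

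The key step is then to select the $(2n+2) \times (2n+2)$ submatrix $M$ of $\tilde Y$ consisting of the columns indexed by species $3, 4, \dots, 2n+4$ (that is, all species except the two free enzymes). By the observation above, each of these $2n+2$ columns contains a single nonzero entry, equal to $1$, and these entries lie in pairwise distinct rows; equivalently, $M$ is a permutation matrix up to a reordering of its columns. Hence $\det M = \pm 1 \neq 0$, so $\rank(M) = 2n+2$, which forces $\tilde Y$ to have full row rank $2n+2$.

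Finally, full row rank means the rows $\tilde y_1, \dots, \tilde y_{2n+2}$ are linearly independent. Since the columns of $\tilde Y^t$ are exactly these rows, any vector $v \in \R^{2n+2}$ with $\tilde Y^t v = \sum_{i} v_i \tilde y_i = 0$ must satisfy $v = 0$; therefore $\ker(\tilde Y^t) = 0$, as claimed.

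I do not anticipate a genuine obstacle here: the only real work is the bookkeeping needed to confirm the column-to-row correspondence described above, namely that the private-column assignment $3 \mapsto 1$, $\{5,7,\dots,2n+3\} \mapsto \{2,\dots,n+1\}$, $4 \mapsto n+2$, and $\{6,8,\dots,2n+4\} \mapsto \{n+3,\dots,2n+2\}$ is a bijection onto the full row index set $\{1,\dots,2n+2\}$. This amounts to checking the parities and ranges of the indices appearing in Table~\ref{tab:complexes_aug_proc}, which is routine. The conceptual content is simply that, under the translation, every complex contains exactly one substrate-or-bound-complex species, so those columns form an identity block.
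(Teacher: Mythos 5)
Your proof is correct and follows essentially the same route as the paper: both select the $(2n+2)\times(2n+2)$ submatrix given by columns $3,4,\dots,2n+4$, observe that it is a permutation matrix (each of these species appears in exactly one translated complex), and conclude $\det = \pm 1$, hence full rank and $\ker(\tilde Y^t)=0$.
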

\begin{proof}
  By Table~\ref{tab:complexes_aug_proc}, 
  \begin{displaymath}
    \tilde Y=
    \left[
      \begin{array}{c}
        e_1 + e_2 + e_3 \\ \hline
        e_2+e_5 \\ e_2 + e_7 \\
        \vdots \\ e_2+e_{2n+3} \\ \hline
        e_1+e_2+e_4 \\ \hline
        e_1+e_{2n+4} \\
        e_1+e_{2n+2} \\ \vdots \\ e_1 +e_6
      \end{array}
    \right]~.
  \end{displaymath}
  As $\tilde Y$ is a $(2n+2)\times(2n+4)$--matrix, it suffices to
  find $2n+2$ linearly independent columns.  Indeed, the submatrix
  $\hat Y$ consisting of the $2n+2$ columns $3$, 4, \ldots, $2n+4$ is
  a permutation matrix, so $\det(\hat Y) =\pm 1$. 
\end{proof}

\section{Existence and uniqueness of steady states}\label{sec:steady_state}

As mentioned earlier, the fully {\em distributive} $n$-site system
admits multiple steady states for all $n \geq 2$~\cite{TG2,
  WangSontag}.  
In this section, we show that the fully processive $n$-site systems preclude multiple steady states.
By equation (\ref{eq:ss_via_translation}), steady states $x \in
\mathbb{R}^{2n+4}_+$ of the processive system are characterized by the
following equivalent conditions:
\begin{equation}
  \label{eq:ss_kerAk}
  \Gamma \, R(x) = 0 
  \quad \Leftrightarrow \quad
  \tilde Y^t \tilde A^t_\kappa\,
  \Psi^{(\mathcal{Y})}(x) = 0 
  \quad \Leftrightarrow \quad 
  \tilde A^t_\kappa\, \Psi^{(\mathcal{Y})}(x) = 0~,
\end{equation}
where the rightmost equivalence follows from Lemma~\ref{lem:rank_Y_aug}.
Accordingly,
we analyze the condition 
 $ \tilde A^t_\kappa\, \Psi^{(\mathcal{Y})}(x) = 0$, 
where $\Psi^{(\mathcal{Y})}(x)$ and $\tilde A^t_\kappa$ are defined in
(\ref{eq:psi_processive}--\ref{eq:def_Ak_aug}), respectively.

The underlying graph of the translated
network~(\ref{eq:augmented_network_proc}) consists of a single
connected component that is strongly connected, so we obtain the 
following consequence of 
  \cite[Lemma~2]{TG}.
\begin{corollary} \label{cor:rk_A-kappa}
  The $(2n+2) \times (2n+2)$-matrix $\tilde A^t_\kappa$ from
  (\ref{eq:def_Ak_aug}) has rank $2n+1$. Moreover, $\ker(\tilde
  A^t_\kappa)$ is spanned by a positive vector $\rho$, whose entries
  are rational functions of the $k_i$ and $\ell_i$.
\end{corollary}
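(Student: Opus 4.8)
The plan is to prove Corollary~\ref{cor:rk_A-kappa} by invoking the Matrix-Tree Theorem for the Laplacian of a strongly connected weakly reversible graph, as packaged in \cite[Lemma~2]{TG}. First I would observe that $\tilde A_\kappa$ is, by construction, the negative of the Laplacian of the translated network~\eqref{eq:augmented_network_proc}, whose underlying digraph is a single strongly connected component on $2n+2$ vertices (the complexes of the translated network). For such a graph, the classical theory of the Laplacian guarantees that its rank is exactly one less than the number of vertices, i.e.\ $2n+1$; the one-dimensional kernel reflects the single strongly connected component. Transposing does not change the rank, so $\tilde A_\kappa^t$ also has rank $2n+1$, establishing the first claim.

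For the second claim, I would appeal directly to the Matrix-Tree Theorem as stated in \cite[Lemma~2]{TG}: for a strongly connected graph, the (one-dimensional) kernel of the Laplacian is spanned by a vector whose $i$-th coordinate is the sum over all spanning trees rooted at (i.e.\ directed toward) vertex $i$ of the products of the edge labels $\kappa_{ij}$ along the tree. Since the graph is strongly connected, every vertex admits at least one such rooted spanning tree, so every coordinate of this vector is a strictly positive polynomial in the rate constants $k_i$ and $\ell_i$; normalizing (dividing by any fixed coordinate, or by the common total) then yields a kernel generator $\rho$ whose entries are rational functions of the $k_i$ and $\ell_i$. The positivity of $\rho$ is exactly what the tree-sum formula delivers, because each spanning-tree weight is a product of positive rate constants. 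This spanning vector lies in $\ker(\tilde A_\kappa)$; taking the kernel of the transpose $\tilde A_\kappa^t$ amounts to applying the Matrix-Tree Theorem to the reverse graph, which is also strongly connected, so the same positivity conclusion holds for $\ker(\tilde A_\kappa^t)$.

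The main obstacle, such as it is, lies not in the mathematics but in correctly matching conventions: one must confirm that the orientation of spanning trees (toward the root versus away from it) used in \cite[Lemma~2]{TG} matches the sign and transpose conventions of $\tilde A_\kappa$ as displayed in~\eqref{eq:def_Ak_aug}, so that the kernel vector lands on the correct side (left versus right null space). Concretely, I would verify that the off-diagonal entries of $\tilde A_\kappa^t$ are the incoming rate constants and that its columns sum to zero, which identifies $\ker(\tilde A_\kappa^t)$ with the space spanned by the tree-weight vector. Because the translated network is both weakly reversible and strongly connected on a single component, all hypotheses of \cite[Lemma~2]{TG} are met verbatim, so the corollary follows immediately without any case analysis on $n$.
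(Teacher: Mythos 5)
Your proposal is correct and takes essentially the same route as the paper: the paper's entire proof is the observation that the translated network~\eqref{eq:augmented_network_proc} has a single strongly connected component, so \cite[Lemma~2]{TG} (a Matrix-Tree-type result) yields rank $2n+1$ and a positive tree-weight vector spanning $\ker(\tilde A^t_\kappa)$, exactly as you argue (the paper's Appendix~\ref{sec:coefficients} carries out the explicit spanning-tree computation you defer). One small slip: the tree-weight vector lies in $\ker(\tilde A^t_\kappa)$, not in $\ker(\tilde A_\kappa)$ (which, since the rows of $\tilde A_\kappa$ sum to zero, contains the all-ones vector), and $\tilde A^t_\kappa$ is not the Laplacian of the reverse graph; but your concrete check---that the columns of $\tilde A^t_\kappa$ sum to zero and its off-diagonal entries are incoming rate constants---identifies the correct side, so the argument stands.
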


  \begin{remark}
    In principle one may explicitly compute the the elements of $\rho$
    by using the Matrix-Tree Theorem. To establish uniqueness of
    steady states (the aim of this section), however, one needs only existence
    of a positive vector spanning $\ker(\tilde A_\kappa^t)$,
    which is given by Corollary~\ref{cor:rk_A-kappa}. As the explicit
    computation of the $\rho_i$ is a rather tedious process, we omit
    this here. The interested reader is referred to
    Appendix~\ref{sec:coefficients}, where we comment on the computation
    of the vector $\rho_i$ in some detail.
  \end{remark}

  Following Corollary \ref{cor:rk_A-kappa}, we let $\rho\in\R_+^{2n+2}$ be a
  vector that spans $\ker(\tilde A_\kappa^t)$. Thus, by~\eqref{eq:ss_kerAk}
of the chemical reaction system defined by the processive
network~\eqref{eq:network_processive} if and only if 
  \begin{displaymath}
    \Psi^{(\mathcal{Y})}(x) = \alpha\, \rho \quad {\rm for~some~}
    \alpha >0~, 
  \end{displaymath}   
  where $\Psi^{(\mathcal{Y})}(x)$ is defined in
  (\ref{eq:psi_processive}). 
  In other words: 
  \begin{align}
    x_1\, x_3 &= \alpha\, \rho_1 \label{eq:monomial-1}\\ 
    x_{2i+3} &= \alpha\, \rho_{i+1} \quad {\rm for~} 1\leq i\leq n \label{eq:monomial-2} \\
    x_2\, x_4 &= \alpha\, \rho_{n+2} \label{eq:monomial-3} \\
   x_{2n+6-2i} &= \alpha\, \rho_{n+2+i}  \quad {\rm for~} 1\leq i\leq n ~. \label{eq:monomial-4}
  \end{align}
To eliminate $\alpha$, we divide equations~\eqref{eq:monomial-1}
and~\eqref{eq:monomial-2} by $x_6 =\alpha\, \rho_{2n+2}$ (the $i=n$
case of~\eqref{eq:monomial-4}) and divide
equations~\eqref{eq:monomial-3} and~\eqref{eq:monomial-4} by $x_{2n+3}
= \alpha\,    \rho_{n+1}$ (the $i=n$ case of~\eqref{eq:monomial-2}).
We thereby obtain the following implicit equations defining the set of
steady states:
\begin{align}
  \label{eq:fraction-1}
  \frac{x_1\, x_3}{x_6} &= \frac{\rho_1}{\rho_{2n+2}} \\
  \label{eq:fraction-2}
  \frac{x_{2i+3}}{x_6} &= \frac{\rho_{i+1}}{\rho_{2n+2}} \quad {\rm
    for~} 1\leq i\leq n \\
  \label{eq:fraction-3}
  \frac{x_2\, x_4}{x_{2n+3}} &= \frac{\rho_{n+2}}{\rho_{n+1}} \\
  \label{eq:fraction-4}  
  \frac{x_{2n+6-2i}}{x_{2n+3}} &=   \frac{\rho_{n+2+i}}{\rho_{n+1}}
  \quad {\rm for~} 1\leq i\leq n~.
\end{align}

These steady state equations are binomials in the $x_i$'s (for instance,
$x_1x_3 - \frac{\rho_1}{\rho_{2n+2}}x_6 = 0$), i.e., the processive systems
have {\em toric steady states} \cite{TSS}, just like the distributive
systems. Therefore, following~\cite[Theorem 3.11]{TSS}, we obtain
the following parametrization of positive steady states in terms of
the coordinates of $\rho$ and the free variables $x_2$, $x_3$, and $x_6$: 

\begin{proposition}[Parametrization of the steady states of
  the processive network] \label{prop:parametrization}
  Let $n$ be a positive integer.
  The set of positive steady states of the chemical reaction system
  defined by the processive $n$-site
  network~\eqref{eq:network_processive} and any choice of rate constants is three-dimensional and is
  the image of the following map $\chi = \chi_{n, \{k_i,\ell_i  \}}$: 
  \begin{align*}
    \chi : \mathbb{R}^3_{+} & \to \mathbb{R}^{2n+4}_{+} \\
    \chi(x_2, x_3, x_6) &:=(x_1, x_2, \dots, x_{2n+4})
  \end{align*}  
  given by
  \begin{align*}
    x_1 &:=  \frac{\rho_1}{\rho_{2n+2}} \, \frac{x_6}{x_3}
    & \quad x_4 &:=  
      \frac{\rho_{n+2}}{\rho_{2n+2}}\, \frac{x_6}{x_2} 
\\
    x_{2i+3} &:=  \frac{\rho_{i+1}}{\rho_{2n+2}} \,  x_6 ,\; {\rm
      for~}1\leq i\leq n 
    & \quad  
    x_{2i+6} &:=  
       \frac{\rho_{2n+2-i}}{\rho_{2n+2}}\, \, x_6,\; {\rm for~} 1\leq i\leq n-1. 
  \end{align*}
\end{proposition}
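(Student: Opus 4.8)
The plan is to solve the binomial steady-state system explicitly. By Corollary~\ref{cor:rk_A-kappa} together with the equivalences in \eqref{eq:ss_kerAk}, the positive steady states are precisely the points $x \in \mathbb{R}^{2n+4}_+$ for which $\Psi^{(\mathcal{Y})}(x) = \alpha\,\rho$ for some $\alpha > 0$, equivalently the solutions of \eqref{eq:monomial-1}--\eqref{eq:monomial-4}. The structural feature that drives the proof is that the monomial map $\Psi^{(\mathcal{Y})}$ in \eqref{eq:psi_processive} has $2n$ of its $2n+2$ coordinates equal to single species concentrations (the odd-indexed $x_5, x_7, \dots, x_{2n+3}$ and the even-indexed $x_6, x_8, \dots, x_{2n+4}$), and only its first and $(n+2)$-th coordinates, $x_1 x_3$ and $x_2 x_4$, are genuine degree-two monomials. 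This is what allows the full system to be solved for all but three of the variables.

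First I would eliminate the scalar $\alpha$. The last entry of $\Psi^{(\mathcal{Y})}$ gives $x_6 = \alpha\,\rho_{2n+2}$ (the $i=n$ case of \eqref{eq:monomial-4}), so $\alpha = x_6/\rho_{2n+2}$, and I treat $x_6$ as the first free parameter. Substituting this value of $\alpha$ into the pure-monomial relations \eqref{eq:monomial-2} and \eqref{eq:monomial-4} directly produces the formulas $x_{2i+3} = (\rho_{i+1}/\rho_{2n+2})\,x_6$ and, after the index shift $i \mapsto n-i$ in \eqref{eq:monomial-4}, $x_{2i+6} = (\rho_{2n+2-i}/\rho_{2n+2})\,x_6$. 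For the two true monomials, I divide $x_1 x_3 = \alpha\,\rho_1$ by the free variable $x_3$ to obtain $x_1 = (\rho_1/\rho_{2n+2})\,(x_6/x_3)$, and divide $x_2 x_4 = \alpha\,\rho_{n+2}$ by the free variable $x_2$ to obtain $x_4 = (\rho_{n+2}/\rho_{2n+2})\,(x_6/x_2)$. These are exactly the defining formulas for $\chi$, and the three parameters $x_2, x_3, x_6$ are left unconstrained.

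To complete the argument I would check the two inclusions and the dimension count. For the forward inclusion, I substitute the formulas defining $\chi(x_2,x_3,x_6)$ back into each coordinate of $\Psi^{(\mathcal{Y})}$ and verify term-by-term that the result equals $(x_6/\rho_{2n+2})\,\rho$; since this lies in $\ker(\tilde A^t_\kappa)$ by Corollary~\ref{cor:rk_A-kappa}, the point is a positive steady state. For the reverse inclusion, any positive steady state $x^*$ satisfies $\Psi^{(\mathcal{Y})}(x^*) = \alpha\,\rho$ with $\alpha > 0$ forced to equal $x_6^*/\rho_{2n+2}$, so the formulas above recover $x^* = \chi(x_2^*, x_3^*, x_6^*)$. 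Finally, the indices $2, 3, 6$ are absent from the list of coordinates that $\chi$ defines, so $x_2, x_3, x_6$ reappear unchanged in the output; hence $\chi$ is injective and its image is three-dimensional.

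The calculation is essentially bookkeeping once the free variables are fixed, and I expect the only delicate point to be the index management: correctly matching the entries $\rho_1, \dots, \rho_{2n+2}$ to the two connected components of the translated network and verifying that it is precisely the $i=n$ instances of \eqref{eq:monomial-2} and \eqref{eq:monomial-4} that furnish $x_{2n+3}$ and $x_6$, the monomials used to clear $\alpha$ in passing to \eqref{eq:fraction-1}--\eqref{eq:fraction-4}. With those shifts pinned down, the result is an instance of the toric-steady-state parametrization of \cite[Theorem 3.11]{TSS}.
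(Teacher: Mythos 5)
Your proposal is correct and takes essentially the same approach as the paper: both characterize the positive steady states via $\Psi^{(\mathcal{Y})}(x) = \alpha\,\rho$ using Corollary~\ref{cor:rk_A-kappa} and the equivalences~\eqref{eq:ss_kerAk}, eliminate $\alpha$ through the relation $x_6 = \alpha\,\rho_{2n+2}$, and read off the formulas for the dependent coordinates in terms of the free variables $x_2, x_3, x_6$. Your direct substitution of $\alpha = x_6/\rho_{2n+2}$ into all the monomial equations (where the paper instead passes through the binomial relations~\eqref{eq:fraction-1}--\eqref{eq:fraction-4} and the intermediate identity~\eqref{eq:x_2n+3}), together with your explicit verification of both inclusions and of injectivity, are only presentational refinements of the same argument.
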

\begin{proof}
  The expressions for $x_1$ and $x_{2i+3}$ follow from 
  equations~\eqref{eq:fraction-1} and~\eqref{eq:fraction-2}, respectively.
  The expression for $x_4$ follows from equation~\eqref{eq:fraction-3}
  together with the equation 
  \begin{align} \label{eq:x_2n+3}
    x_{2n+3} =  
      \frac{\rho_{n+1}}{\rho_{2n+2}}\, x_6~,
  \end{align}
  which in turn follows 
    from
    the $i=n$ case of equation~\eqref{eq:fraction-4}.
  The expression for $x_{2i+6}$
  follows from equations~\eqref{eq:fraction-4} and~\eqref{eq:x_2n+3}
  again, together with an index shift that replaces $i$ (where $1 \leq
  i \leq n-1$) by $n-i$ (so, $2n+6-2i \mapsto 2i+6$ and $n+2+i \mapsto
  2n+2-i$).
\end{proof}

\begin{remark}
  That we could achieve Proposition~\ref{prop:parametrization} was
  guaranteed by the rational parametrization theorem for multisite
  systems of Thomson and Gunawardena~\cite{TG}; see also \cite[Theorem
  3.11]{TSS}.
      An alternative derivation follows from a recent result of Feliu and Wiuf~\cite[Theorem~1]{FeliuWiuf}. 
    This result guarantees that one may express the concentrations of
    the \lq intermediate\rq{} species $S_0K$, \ldots, $S_{n-1} K$
    and hence the variables $x_5$, $x_7$, \ldots, $x_{2n+3}$ in terms
    of the product $x_1\, x_3$. Likewise one may express the concentrations of
    the \lq intermediate\rq{} species $S_1 F$, \ldots, $S_{n} F$
    and hence the variables $x_6$, $x_8$, \ldots, $x_{2n+4}$ in terms
    of the product $x_2\, x_4$. By exploiting the steady state
    relation of $x_1\, x_3$ and $x_2\, x_4$ one may then arrive at a
    parameterization.  
	Although the approach we took is more lengthy, it allows us to see that Johnston's analysis of the 1-site network generalizes~\cite{translated}.
\end{remark}

\begin{remark}
In the parametrization in Proposition~\ref{prop:parametrization}, two of the coordinates require dividing by $x_2$ or $x_3$, so the parametrization is not technically a monomial map.  However, this can be made monomial easily: by introducing $y:=\frac{x_6}{x_2x_3}$, so that the parametrization accepts as input $(x_2,x_3,y)$, we see that $\frac{x_6}{x_3}$ is replaced by $x_2y$ and $\frac{x_6}{x_2}$ is replaced by $x_3y$.
\end{remark}

Below we will restate Proposition~\ref{prop:parametrization} so that
we can apply results from~\cite{signs} to rule out multistationarity.
First we require some notation. 

\noindent
{\bf Notation.}
\begin{itemize}
\item For $x,y \in \R^n$,
we denote the componentwise (or Hadamard) product by $x \circ y \in \R^n$, that is, $(x \circ y)_i = x_i y_i$.
\item For $x \in \R^n_+$, the vector $\ln(x) \in \R^n$ is defined componentwise: $\ln(x)_i:=\ln(x_i)$.
\item For a vector $x \in \R^n$, we obtain the \emph{sign vector}
      $\sign(x)\in \{-,0,+\}^n$ by applying the sign function
      componentwise. For a subset $X$ of $\R^n$, we then have  $\sign(X):=\{ \sign(x) \mid x \in X\}$.
\end{itemize}

We collect the exponents of $x_2$, $x_3$, and $x_6$ in the above 
parametrization (Proposition~\ref{prop:parametrization}) as rows 
of a $3 \times (2n+2)$-matrix we call $B$:
\begin{equation}
  \label{eq:exp_mat}
  B^t 
  := \left[
    \begin{array}{rrrr | rrrr}
      0 & 1 & 0 & -1 & 0 & 0 & \cdots & 0 \\
      -1 & 0 & 1 &  0 & 0 & 0 & \cdots & 0 \\
      1 & 0 & 0 &  1 & 1 & 1 & \cdots & 1 \\
    \end{array}
  \right]~.
\end{equation}
  Also, we use $x^*$ to denote the value of $\chi$ at $(1,1,1)$:
  \begin{equation} \label{eq:x-star}
    x^*= x^*(n, \{k_i,\ell_i\}):=\chi(1,1,1) \in \mathbb{R}_{>0}^{2n+2} ~.
  \end{equation}
  We obtain the following representation of the map $\chi(\cdot)$ from
  Proposition~\ref{prop:parametrization}:
\begin{proposition}[Parametrization, restated] \label{prop:param_restated}
  Let $\xi = (\xi_1,\xi_2,\xi_3)$ be a vector of indeterminates. 
  Then the map $\chi$ given in Proposition~\ref{prop:parametrization}
  can be rewritten as:
  \begin{align*}
    \chi(\xi) = x^* \circ \Psi^{(B)}(\xi)~, 
  \end{align*}
  where the matrix $B$ and the vector $x^*$ are given 
  in~(\ref{eq:exp_mat}--\ref{eq:x-star}) 
  {and $\Psi^{(B)}(\xi)$ is as
    in (\ref{eq:Psi_def}).
  } 
  Thus, distinct positive vectors $a, b \in \mathbb{R}^{2n+2}_+$ are
  both steady states of the system  
  if and only if $\ln   b - \ln a \in \im(B)$.
\end{proposition}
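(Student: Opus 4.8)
The plan is to prove the two assertions of Proposition~\ref{prop:param_restated} in turn: first the rewriting $\chi(\xi) = x^* \circ \Psi^{(B)}(\xi)$, and then the characterization of when two positive vectors are both steady states.

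For the first assertion, the idea is simply to match exponents coordinate by coordinate. Recall that $x^* = \chi(1,1,1)$, so each coordinate $x^*_j$ is obtained from the formulas in Proposition~\ref{prop:parametrization} by setting $x_2 = x_3 = x_6 = 1$; this leaves only the ratios of the $\rho$'s as the value $x^*_j$. Writing $\xi = (\xi_1,\xi_2,\xi_3)$ in place of $(x_2,x_3,x_6)$, I would verify that for each $j$ the formula for $\chi(\xi)_j$ in Proposition~\ref{prop:parametrization} factors as $x^*_j$ times a monomial $\xi_1^{b_{1j}} \xi_2^{b_{2j}} \xi_3^{b_{3j}}$, where the exponent triple $(b_{1j}, b_{2j}, b_{3j})$ is precisely the $j$-th column of $B$ read off from~\eqref{eq:exp_mat}. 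For example, $x_1 = \frac{\rho_1}{\rho_{2n+2}} \frac{x_6}{x_3}$ gives exponents $(0,-1,1)$ in $(x_2,x_3,x_6)$, which is the first column of $B^t$'s data; the coordinate $x_4 = \frac{\rho_{n+2}}{\rho_{2n+2}} \frac{x_6}{x_2}$ gives exponents $(-1,0,1)$; each $x_{2i+3}$ and each $x_{2i+6}$ depends only on $x_6$, contributing $(0,0,1)$. Since $\Psi^{(B)}(\xi)_j = \xi^{b_j}$ with $b_j$ the $j$-th row of $B$ (equivalently the $j$-th column of $B^t$), the componentwise product $x^* \circ \Psi^{(B)}(\xi)$ reproduces $\chi(\xi)$ exactly. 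This is a direct bookkeeping check against the definition~\eqref{eq:exp_mat} of $B$, so I would present it as the straightforward matching of the columns of $B$ with the monomial parts of the coordinates of $\chi$.

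For the second assertion, the key observation is that by Proposition~\ref{prop:parametrization} the positive steady states are exactly the image of $\chi$, which by the first part equals $\{x^* \circ \Psi^{(B)}(\xi) : \xi \in \R^3_+\}$. Taking componentwise logarithms turns the Hadamard product into a sum and the monomial map $\Psi^{(B)}$ into a linear map: $\ln \chi(\xi) = \ln x^* + B^t \ln \xi$ (using the row/column conventions of~\eqref{eq:Psi_def} and~\eqref{eq:exp_mat}). Hence a positive vector $a$ is a steady state if and only if $\ln a - \ln x^* \in \im(B^t) = \im(B)$. If both $a$ and $b$ are steady states, then $\ln a - \ln x^*$ and $\ln b - \ln x^*$ both lie in $\im(B)$, so their difference $\ln b - \ln a \in \im(B)$. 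Conversely, if $a$ is a steady state and $\ln b - \ln a \in \im(B)$, then $\ln b - \ln x^* = (\ln b - \ln a) + (\ln a - \ln x^*) \in \im(B)$, so $b$ is also a steady state. This gives the stated biconditional.

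The main subtlety to get right is the bookkeeping in the first step — ensuring the exponent conventions of $\Psi^{(B)}$ in~\eqref{eq:Psi_def} (rows of $B$) match the transposed display of $B$ in~\eqref{eq:exp_mat}, and correctly handling the index shift $i \mapsto n-i$ that was already used in the proof of Proposition~\ref{prop:parametrization} to pass from $x_{2n+6-2i}$ to $x_{2i+6}$. Once the identity $\ln \chi(\xi) = \ln x^* + B^t \ln \xi$ is in hand, the logarithmic linearization makes the second part essentially immediate; I anticipate no real obstacle there, since it is a formal consequence of the fact that translating the image of a linear map by a fixed vector makes "lying in the image" an affine condition whose differences are governed by $\im(B)$.
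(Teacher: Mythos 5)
Your proposal is correct and follows essentially the same route as the paper, whose proof is a one-line appeal to the construction of $B$ and $x^*$ and the definition of $\Psi^{(B)}$ in~(\ref{eq:Psi_def}); you simply spell out the coordinate-by-coordinate exponent matching and the logarithmic linearization $\ln\chi(\xi) = \ln x^* + (\text{exponent matrix})\,\ln\xi$ that the paper leaves implicit. Your reading of the ``if and only if'' (proving the converse under the assumption that one of the two vectors is a steady state) is the intended one, and your care with the row/column conventions of~(\ref{eq:exp_mat}) versus~(\ref{eq:Psi_def}) matches the paper's usage.
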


\begin{proof}
  Follows from Proposition~\ref{prop:parametrization},
  the construction of both, $B$ and $x^*$ and the fact that by
    (\ref{eq:Psi_def}) one has
     $\Psi^{(B)}(\xi) = (\xi^{b_1},\, \ldots,\, \xi^{b_{2n+2}})^t$.
\end{proof}

Next we consider steady states within a stoichiometric compatibility
class, that is, we analyze the intersection of ${\rm Im}(\chi)$ with
parallel translates $x^\prime+ \St$ of the stoichiometric subspace of
the processive network~\eqref{eq:network_processive}.
The following is an application of the
discussion preceding \cite[Proposition~3.9]{signs}. 
For any $x^\prime \in \R^{2n+2}_+$, the intersection ${\rm Im}(\chi)
\cap (x^\prime+ \St) $ is nonempty 
if and only if there 
exist vectors $\xi \in \mathbb{R}^3_+$ and $u\in \St$ 
such that
\begin{align} \label{eq:steady-state-in-class}
  \chi(\xi) = x^\prime + u.
\end{align}
Let $\mathcal{A} \in \mathbb{R}^{3 \times (2n+2)}$ be 
  the (full-rank) 
matrix with ${\rm ker} \mathcal{A} = \St$ 
given in
  equation~(\ref{eq:matrix_A}) of Remark~\ref{rmk:cons_laws}.
Then, equation~\eqref{eq:steady-state-in-class} implies that
\begin{displaymath}
  \mathcal{A}\, \chi(\xi) = \mathcal{A}\, x^\prime.
\end{displaymath}
Therefore, if the map $f_{x^*} : \R^3_+ \to \R^3$  given by 
\begin{equation} \label{eq:phi-for-injectivity}
  f_{x^*}(\xi) ~:=~ \mathcal{A}\, \chi(\xi)
\end{equation}
is injective, then every parallel translate $x^\prime
+ \St$ (and thus, every stoichiometric compatibility class) contains
at most one element of ${\rm Im}(\chi)$.  So, by Propositions~\ref{prop:parametrization} and~\ref{prop:param_restated}, multistationarity would be precluded for all  processive systems.
To decide injectivity of $ f_{x^*}$, we use the following result which 
is a direct consequence of \cite[Proposition~3.9]{signs}:
\begin{proposition}[M\"uller {\em et al.}]\label{prop:signs}
  Let $\St$ be the stoichiometric subspace of the processive
  network~\eqref{eq:network_processive}, and let $B$ be as
  in~(\ref{eq:exp_mat}). If 
  \begin{displaymath}
    \sign(\im(B)) \cap \sign(\St) = \{ 0 \}~,
  \end{displaymath}
  then the polynomial map $f_{x^*} : \R^3_+ \to \R^3$ given  in~\eqref{eq:phi-for-injectivity} is injective.
\end{proposition}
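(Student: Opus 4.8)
The plan is to prove the implication directly, reconstructing the relevant special case of \cite[Proposition~3.9]{signs}: assuming the sign condition $\sign(\im(B)) \cap \sign(\St) = \{0\}$, I would show that $f_{x^*}(a) = f_{x^*}(b)$ forces $a = b$ for all $a,b \in \R^3_+$. First I would unwind the definition $f_{x^*}(\xi) = \mathcal{A}\,\chi(\xi)$. Since $\mathcal{A}$ is the full-rank matrix with $\ker\mathcal{A} = \St$ from Remark~\ref{rmk:cons_laws}, the equality $f_{x^*}(a) = f_{x^*}(b)$ is equivalent to $\mathcal{A}\bigl(\chi(a) - \chi(b)\bigr) = 0$, i.e.\ to the membership $\chi(a) - \chi(b) \in \St$, whence $\sign\bigl(\chi(a)-\chi(b)\bigr) \in \sign(\St)$. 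Because $x^*$ is strictly positive and $\xi \mapsto \Psi^{(B)}(\xi)$ is positive-valued, both $\chi(a)$ and $\chi(b)$ lie in the positive orthant, so $\chi(a)-\chi(b)$ is exactly the kind of difference of positive vectors on which the sign hypothesis is designed to act.

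The key step is to show that this same sign vector also lies in $\sign(\im(B))$. For this I would invoke the elementary monotonicity of the logarithm: for any two positive vectors $p,q$ one has $\sign(p - q) = \sign(\ln p - \ln q)$, since coordinatewise $p_i > q_i \iff \ln p_i > \ln q_i$ (and likewise for $=$ and $<$). Applying this with $p = \chi(a)$, $q = \chi(b)$, and using the restated parametrization $\chi(\xi) = x^* \circ \Psi^{(B)}(\xi)$ of Proposition~\ref{prop:param_restated} together with the defining property $\ln\Psi^{(B)}(\xi) = B\,\ln\xi$ from~\eqref{eq:Psi_def}, the additive terms $\ln x^*$ cancel and I obtain
\begin{align*}
\sign\bigl(\chi(a) - \chi(b)\bigr)
&= \sign\bigl(\ln\chi(a) - \ln\chi(b)\bigr)
= \sign\bigl(B(\ln a - \ln b)\bigr)
\in \sign(\im(B)).
\end{align*}
Combining this with the previous paragraph, $\sign\bigl(\chi(a)-\chi(b)\bigr) \in \sign(\im(B)) \cap \sign(\St) = \{0\}$ by hypothesis. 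Hence $\chi(a) - \chi(b) = 0$, which by the second equality above is equivalent to $B(\ln a - \ln b) = 0$.

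It remains to pass from $B(\ln a - \ln b) = 0$ to $a = b$, and here the explicit shape of $B$ in~\eqref{eq:exp_mat} enters. I would check that $B$ has full column rank $3$: the three rows of $B^t$ are linearly independent, since the third row carries (unit) entries in the coordinates beyond the fourth, where the first two rows vanish, and the first two rows are visibly independent on the first four coordinates. Thus $\ker(B) = 0$, so $\ln a = \ln b$, and injectivity of the logarithm on $\R_{>0}$ gives $a = b$, establishing injectivity of $f_{x^*}$. The genuinely delicate point of the argument is the logarithmic change of variables that converts the multiplicative, Hadamard-product structure of $\chi$ into the linear image $\im(B)$ on which the sign hypothesis is phrased; once that bridge and the monotonicity lemma are in place the conclusion is immediate. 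The full-rank verification for $B$ is routine but cannot be omitted, as it is precisely what upgrades $\chi(a) = \chi(b)$ to $a = b$.
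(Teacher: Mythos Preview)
Your proof is correct. The paper itself does not give an argument here: its proof consists solely of invoking the equivalence (ii)\,$\Leftrightarrow$\,(iii) of \cite[Proposition~3.9]{signs}. What you have written is precisely a self-contained reconstruction of that implication in the special case at hand. The two essential ingredients you identify---the coordinatewise identity $\sign(p-q)=\sign(\ln p-\ln q)$ for positive vectors, which transports the difference $\chi(a)-\chi(b)$ into $\im(B)$, and the full column rank of $B$, which lets you recover $a=b$ from $\chi(a)=\chi(b)$---are exactly the mechanism behind the cited result. So your route is not genuinely different from the paper's; it simply unpacks the black-box citation, which has the advantage of making the argument self-contained and transparent.
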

\begin{proof}
  Follows from the equivalence (ii) $\Leftrightarrow$ (iii) of
  \cite[Proposition~3.9]{signs}.
\end{proof}

\begin{remark} \label{rmk:signs-result}
Proposition~\ref{prop:signs} appears in many works, for instance, \cite{BP,ShinarFeinberg2012}.  In fact, criteria for injectivity (including those given by sign conditions) have a long history in the study of reaction systems.  For a more detailed discussion, see~\cite{signs}.
\end{remark}


We can now give the main result of this section:
  \begin{theorem} \label{thm:existence-uniqueness-via-signs}
  Let $n$ be a positive integer. 
For any chemical reaction system~\eqref{eq:ODE} arising from the processive $n$-site network~\eqref{eq:network_processive} and any choice of rate constants,
 each stoichiometric compatibility class $\invtPoly$ contains a unique steady state $\eta$, 
and $\eta$ is a positive steady state.
\end{theorem}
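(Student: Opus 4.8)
The plan is to assemble the pieces already laid out in the excerpt. The strategy has two independent halves: existence of at least one steady state in each stoichiometric compatibility class, and uniqueness (i.e., precluding multistationarity). Existence has essentially been handed to us in Remark~\ref{rmk:brouwer}: since the conservation relations of Remark~\ref{rmk:cons_laws} force each compatibility class $\invtPoly$ to be compact, and since Lemma~\ref{lem:nobss} rules out boundary steady states, the Brouwer fixed-point theorem (as applied in \cite[Remark~3.9]{TSS}) guarantees at least one steady state in $\invtPoly$; moreover, because there are no boundary steady states, this steady state is necessarily positive. So the crux is uniqueness.

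For uniqueness, the plan is to invoke Proposition~\ref{prop:signs}, whose hypothesis is the sign-orthogonality condition $\sign(\im(B)) \cap \sign(\St) = \{0\}$. First I would verify this condition explicitly using the matrix $B$ from~\eqref{eq:exp_mat} and a description of $\St = \ker(\mathcal{A})$ coming from~\eqref{eq:matrix_A}. Concretely, suppose $v \in \im(B)$ and $w \in \St$ share a common sign vector $\sigma = \sign(v) = \sign(w) \neq 0$; I would derive a contradiction by examining the coordinates. The rows of $B^t$ show that any $v \in \im(B)$ has the form in which the last $2n-1$ coordinates all carry the sign of the third parameter $\xi_3$ (they are equal to a single $+1$ contribution), while the interplay of the first few coordinates (governed by the $\pm1$ pattern in the first two rows of $B^t$) constrains the signs there. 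On the other side, membership in $\St = \ker(\mathcal{A})$ means $v$ must be annihilated by each of the three conservation vectors in $\mathcal{A}$; since those rows have all-nonnegative entries with overlapping supports, a nonzero vector all of whose nonzero coordinates agree in sign with a fixed pattern cannot lie in $\ker(\mathcal{A})$ unless that pattern is trivial. Balancing these two requirements should force $\sigma = 0$.

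Once the sign condition is checked, Proposition~\ref{prop:signs} yields that $f_{x^*} = \mathcal{A}\circ\chi$ is injective on $\R^3_+$. Then the chain of implications already assembled before Proposition~\ref{prop:signs} closes the argument: if two distinct positive steady states $a,b$ lay in the same class $x'+\St$, then by Proposition~\ref{prop:param_restated} both are in $\im(\chi)$, and applying $\mathcal{A}$ gives $\mathcal{A} a = \mathcal{A} x' = \mathcal{A} b$, i.e. $f_{x^*}(\xi_a) = f_{x^*}(\xi_b)$ for the respective preimages, contradicting injectivity. Hence each class contains at most one element of $\im(\chi)$, and since Propositions~\ref{prop:parametrization} and~\ref{prop:param_restated} show every positive steady state lies in $\im(\chi)$, multistationarity is precluded. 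Combining with existence, each $\invtPoly$ contains exactly one steady state $\eta$, which is positive.

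I expect the main obstacle to be the explicit verification of $\sign(\im(B)) \cap \sign(\St) = \{0\}$, since this is the one genuinely computational step not yet performed in the excerpt; everything else is bookkeeping that chains together already-proven lemmas and propositions. The delicate point is handling the sign pattern for general $n$ uniformly rather than case-by-case, so I would aim to argue structurally from the block form of $B^t$ in~\eqref{eq:exp_mat} and the nonnegativity and support structure of $\mathcal{A}$ in~\eqref{eq:matrix_A}, rather than attempting to enumerate sign vectors.
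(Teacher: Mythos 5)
Your proposal is correct and follows essentially the same route as the paper: existence via the Brouwer argument of Remark~\ref{rmk:brouwer} (compact compatibility classes plus no boundary steady states by Lemma~\ref{lem:nobss}), and uniqueness by verifying $\sign(\im(B)) \cap \sign(\St) = \{0\}$ and invoking Proposition~\ref{prop:signs} together with the injectivity-to-uniqueness chain. The verification you defer is exactly where the paper's proof does its work: it replaces $B$ by a matrix $\tilde B$ (adding the first two columns of $B$ to the third), writes $\alpha = \tilde B\,\beta$, combines the shared sign of coordinates $5,\dots,2n+4$ with the kernel relations of $\mathcal{A}$, and closes with a short case analysis on $\sign(\beta_2)$ --- precisely the structural argument you outline.
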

\begin{proof}
  As explained earlier in Remark~\ref{rmk:brouwer}, 
  the existence of at least one (necessarily positive) steady state in
  $\invtPoly$ is guaranteed by the Brouwer fixed-point theorem.
  Thus, to prove the theorem, we need only preclude multistationarity. 
  So, by Proposition~\ref{prop:signs} and the preceding discussion, it
  suffices to prove the nonexistence of nonzero vectors 
  $\alpha\in\im(B)$ and $s\in\St$ with $\sign(s) =
  \sign(\alpha)$. 
  We begin by defining $\tilde B$ as the matrix obtained from $B$
  by adding the first two columns to the third column, so $\im(B) =
  \im(\tilde B)$:
  \begin{align} \label{eq:B}
    \tilde B^t = \left[
      \begin{array}{rrrr | rrrr}
        0 & 1 & 0 & -1 & 0 & 0 & \cdots & 0 \\
       -1 & 0 & 1 &  0 & 0 & 0 & \cdots & 0 \\
        0 & 1 & 1 &  0 & 1 & 1 & \cdots & 1 \\
      \end{array}
    \right]~.
    \end{align}
    We proceed by contradiction: assume that there exist nonzero vectors 
    $\alpha\in\im(\tilde B)$ and $s\in \St$ with 
    $\sign(\alpha) = \sign(s)$.
    By~\eqref{eq:B}, $\alpha\in\im(\tilde B)$ implies that 
    \begin{displaymath}
      \sign(\alpha_5) = \sign(\alpha_6) = \dots = \sign(\alpha_{2n+4}),
    \end{displaymath}
    so we conclude that 
    \begin{equation}
      \label{eq:s_equal}
      \sign(s_5) = \sign(s_6) = \dots = \sign(s_{2n+4})
    \end{equation}
    as well. Also, from our choice of $\mathcal{A}$ in~\eqref{eq:matrix_A}, the vector $s\in\ker(\mathcal{A})$ satisfies
    \begin{align*}
      s_1 = -s_5 - s_7 - \cdots - s_{2n+3} \\ 
      s_2 = -s_6 - s_8 - \cdots - s_{2n+4} \\ 
      \;
      s_3 + s_4 = -s_5 - s_6 - \cdots - s_{2n+4}~. 
    \end{align*}
 Thus, using (\ref{eq:s_equal}), the coordinates $s_1$, $s_2$, \ldots, $s_5$
    must satisfy 
    \begin{equation}
      \label{eq:s_diff}
      \sign(s_1)= -\sign(s_5), \quad
      \sign(s_2) = -\sign(s_5), \quad 
      \sign(s_3+s_4)
      = -\sign(s_5).
    \end{equation}
    We assumed that $\sign(s) = \sign(\alpha)$, so the coordinates 
    $\alpha_1$, $\alpha_2$, \ldots, $\alpha_5$ must satisfy the same
    conditions~(\ref{eq:s_diff}). 
    Now we make use of a vector $\beta \in \mathbb{R}^3$ for which
    $\alpha=\tilde B\, \beta$, which exists because $\alpha \in \im(B)
    = \im(\tilde B)$.  By~\eqref{eq:B}, we have:
    \begin{align*}
      \alpha_1 = - \beta_2 \quad 
      \alpha_2 = \beta_1+ \beta_3 \quad
      \alpha_3 = \beta_2 + \beta_3\quad
      \alpha_4  = -\beta_1\quad
      \alpha_5 = \beta_3~.
    \end{align*}
    Thus, the conditions on $\alpha$ arising from~(\ref{eq:s_diff}) imply: 
    \begin{align} \label{eq:beta-inequalities}
      \sign(-\beta_2)= \sign(- \beta_3) \quad
      \sign(\beta_1 + \beta_3)= \sign(- \beta_3) \quad
      \sign(-\beta_1 + \beta_2+ \beta_3)= \sign(- \beta_3)~.
    \end{align}
    We distinguish three cases based on the sign of $\beta_2$.

    \noindent 
    {\bf Case One}: $\beta_2>0$.  The conditions~\eqref{eq:beta-inequalities} yield 
    \begin{displaymath}
      -\beta_2<0, \quad
      -\beta_3<0, \quad      
      \beta_1+\beta_3<0, \quad
      -\beta_1+\beta_2+\beta_3<0~.
    \end{displaymath}
    The sum of the first, second, and fourth inequalities yields the
    consequence $-\beta_1<0$, while the sum of the second and the
    third inequalities yields the consequence $  \beta_1 < 0$, which
    is a contradiction. 
    
    \noindent
    {\bf Case Two}:  $\beta_2<0$. This similarly yields a
    contradiction (reverse all inequalities in Case One). 

    \noindent
    {\bf Case Three}:  $\beta_2 = 0$.  The first condition 
    in~\eqref{eq:beta-inequalities} implies $\beta_3 = 0$, which, by the 
    second condition in~\eqref{eq:beta-inequalities}, implies that 
    $\beta_1=0$.  Thus, $\alpha=\tilde B\, \beta $ is zero ,
    so we again reach a contradiction. 
  \end{proof}

Having established the existence and uniqueness of steady states, the
next section addresses the natural next question: global convergence.

\section{Convergence to a global attractor}\label{sec:convergence}
In this section, we prove that 
each steady state of the processive network taken with mass-action kinetics is a global attractor of the corresponding compatibility class (Theorem~\ref{thm:global_conv}).  
The proof is via Lemma~\ref{lem:AS}, which is due to Angeli and Sontag~\cite{AS}.  Their work is one of many recent papers proving convergence of reaction systems by way of monotone systems theory; see  Angeli, De Leenheer, and Sontag~\cite{ADS10}, Banaji and Mierczynski~\cite{BanajiM}, and Donnell and Banaji~\cite{DB}.

{\bf Setup.}  We begin by recalling the setup in Angeli and Sontag~\cite[\S3]{AS}.  We consider any reaction kinetics system with $s$ chemical species and $m$ reactions (where each pair of reversible reactions is counted only once) given by 
$\dot{x} = \Gamma \, R(x)$, as in~\eqref{eq:ODE}.  Each such system together with a vector $\sigma \in \Rnn^s$ (viewed as an initial condition of~\eqref{eq:ODE}) defines 
 another ODE system:
\begin{align} \label{eq:R-system}
\dot{c} = f_{\sigma}(c) := R(\sigma + \Gamma c),
\end{align}
with associated state space  
(which is sometimes called the space of ``reaction coordinates'')
\begin{align} \label{eq:X-sigma-state-space}
X_{\sigma} = \left\{ c \in \R^m \mid \sigma + \Gamma c \in \Rnn^s \right\} .
\end{align}

To state Lemma~\ref{lem:AS} below, we require the following
definition.

\begin{definition} \label{def:monotone} ~
\begin{enumerate}
\item 
	The nonnegative orthant $\mathbb{R}^m_{\geq 0}$ defines
	a partial order on $\mathbb{R}^m$ given by $c_1 \succcurlyeq c_2$ if 
	$c_1 - c_2 \in \mathbb{R}^m_{\geq 0}$.  
	Also, we write $c_1 \succ c_2$ if $c_1 \succcurlyeq c_2$ with $c_1 \neq c_2$,
	and 
	$c_1 \gg c_2$ if $c_1 - c_2 \in \mathbb{R}^m_{>0}.$ 
\item A dynamical system with state space 
$X \subseteq \mathbb{R}^m$ 
and flow denoted by $\phi_t(c)$ (for initial condition $c$) is 
{\em monotone with respect to the nonnegative orthant} $\mathbb{R}^m_{\geq 0} $
if 
the partial order arising from $\mathbb{R}^m_{\geq 0}$
is preserved 
  by the forward flow:
 for $c_1, c_2 \in X$, if $c_1 \geq c_2$ then $\phi_t(c_1) \geq \phi_t(c_2)$ for all $t \geq 0$.  
A dynamical system is 
{\em strongly monotone with respect to the nonnegative orthant}  
if 
it is monotone with respect to the nonnegative orthant 
and, additionally, for $c_1, c_2 \in X$, the relation $c_1 \succ c_2$ implies that $\phi_t(c_1) \gg \phi_t(c_2)$ for all $t >0$.
\end{enumerate}
\end{definition}

The following result is due to Angeli and Sontag \cite[Corollary 3.3]{AS}.
\begin{lemma}[Angeli and Sontag] \label{lem:AS}
Let $R$, $\Gamma$, and $\sigma$ be as in the setup above.  Assume that:
\begin{enumerate}
\item the stoichiometric matrix $\Gamma$ has rank $m-1$, with kernel spanned by some positive vector (i.e., in $\Rplus^m$),
\item every trajectory of the reaction kinetics system~\eqref{eq:ODE} is bounded, and 
\item the system $\dot c = f_{\sigma}(c)$ defined in~\eqref{eq:R-system} is strongly monotone
with respect to the nonnegative orthant.
\end{enumerate}
Then there exists a unique $\eta=\eta_{\sigma} \in \Rnn^s$ such that for any initial condition $\mu \in \Rnn^s$ that is stoichiometrically compatible with $\sigma$ (i.e., $\mu - \sigma \in {\rm Im}(\Gamma)$), the trajectory $x(t)$ of the reaction kinetics system~\eqref{eq:ODE} with initial condition $x(0)= \mu$ converges to $\eta$: $\lim\limits_{t \to \infty} x(t) = \eta$.
\end{lemma}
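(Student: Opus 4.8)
The plan is to deduce the statement directly from the theory of strongly monotone dynamical systems, since Lemma~\ref{lem:AS} is precisely \cite[Corollary~3.3]{AS}; below I sketch the underlying mechanism so that the three hypotheses are seen to do exactly what is needed. The central object is the reaction-coordinate system $\dot c = f_\sigma(c) = R(\sigma + \Gamma c)$ on the state space $X_\sigma$ of~\eqref{eq:X-sigma-state-space}, and the goal is to prove that each of its bounded trajectories converges to a single equilibrium and then to transport this convergence back to the original variables through the relation $x = \sigma + \Gamma c$.

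First I would exploit hypothesis~(1). Writing $\ker(\Gamma) = \R v$ with $v \in \Rplus^m$, the identity $\Gamma(c + tv) = \Gamma c$ shows that the vector field is invariant under translation by $v$, that is, $f_\sigma(c + tv) = f_\sigma(c)$ for all $t$. Consequently the equilibrium set of $f_\sigma$ is a union of lines parallel to $v$, and because $v$ is strictly positive, any two distinct points on such a line are strictly comparable in the partial order of Definition~\ref{def:monotone}. In particular, any two distinct equilibria that can arise as limits within a single compatibility class are strictly ordered, so the relevant equilibrium set is totally ordered.

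The heart of the argument is then hypothesis~(3). By the theory of strongly monotone flows, the $\omega$-limit set of a bounded trajectory is unordered, meaning no two of its distinct points are comparable in the order of Definition~\ref{def:monotone}; and, using boundedness~(2) together with the further structure exploited in \cite{AS}, every such $\omega$-limit set consists of equilibria. Since the equilibrium set is totally ordered by the previous step, an unordered set of equilibria can contain at most one point, so each bounded trajectory of $f_\sigma$ converges to a single equilibrium. Pulling back through $x = \sigma + \Gamma c$, and noting that $\rank \Gamma = m-1$ forces distinct equilibrium lines to map to distinct steady states, yields a unique limit $\eta = \eta_\sigma$ for every stoichiometrically compatible initial condition.

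The step I expect to be the main obstacle is precisely the upgrade in the third paragraph: passing from Hirsch-type \emph{generic} convergence of strongly monotone systems to \emph{global} convergence of every bounded trajectory to one equilibrium. This is where the monotone-systems machinery of \cite{AS} does the real work, using the translation symmetry along the positive vector $v$ to rule out nonconvergent trajectories and the unordering principle to collapse each $\omega$-limit set to a point. Since this is an established result, in the paper I would simply invoke \cite[Corollary~3.3]{AS}; the role of the present lemma is to package its hypotheses in the form that we verify for the processive network in the next section.
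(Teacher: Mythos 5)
Your proposal matches the paper exactly: the paper offers no proof of this lemma, simply attributing it to Angeli and Sontag as \cite[Corollary~3.3]{AS}, which is precisely what you ultimately invoke. Your supplementary sketch of the mechanism is a reasonable gloss, though one step in it is loose --- the claim that the relevant equilibrium set is totally ordered begs part of the question, since equilibria of $f_\sigma$ lying on \emph{distinct} lines parallel to the kernel vector $v$ need not be comparable, and ruling out such distinct lines is part of what the cited result establishes --- but since the lemma is an imported result, the citation alone suffices, just as in the paper.
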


Following closely the example of the 1-site system presented by Angeli and Sontag~\cite[\S3]{AS}, we now use Lemma~\ref{lem:AS} to extend their result beyond the $n=1$ case: the following result states that the processive $n$-site network~\eqref{eq:network_processive} is convergent. 
Note that in applying Lemma~\ref{lem:AS}, we will show that the new system in~\eqref{eq:R-system}, not the original processive system, is strongly monotone. 
Also note that by obtaining existence and uniqueness of steady states, the theorem  
supersedes our earlier result (Theorem~\ref{thm:existence-uniqueness-via-signs}), but the approach here can not obtain the parametrization of the steady states  we accomplished earlier (Proposition~\ref{prop:parametrization}).

\begin{theorem} \label{thm:global_conv}
Let $n$ be a positive integer. 
For any chemical reaction system~\eqref{eq:ODE} arising from the processive $n$-site network~\eqref{eq:network_processive} and any choice of rate constants,
\begin{enumerate}
	\item each stoichiometric compatibility class $\invtPoly$ contains a unique steady state $\eta$, 
	\item $\eta$ is a positive steady state, and 
	\item $\eta$ is the global attractor of $\invtPoly$.
\end{enumerate}
\end{theorem}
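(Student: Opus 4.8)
Parts (1) and (2) are already the content of Theorem~\ref{thm:existence-uniqueness-via-signs}, so the only genuinely new assertion is the global convergence statement (3). The plan is to deduce it from Lemma~\ref{lem:AS} of Angeli and Sontag, applied for each fixed vector of total amounts $\sigma \in \Rnn^{2n+4}$. Concretely, I would take the processive system $\dot x = \Gamma\, R(x)$ of~(\ref{eq:def_Gamma_proc}--\ref{eq:def_R_proc}), which has $m = 2n+2$ reactions, and verify the three hypotheses of Lemma~\ref{lem:AS}; its conclusion then produces, in each compatibility class, a unique point $\eta = \eta_\sigma$ to which every compatible trajectory converges. Since this $\eta_\sigma$ is the $\omega$-limit of bounded trajectories it is a steady state, so by Theorem~\ref{thm:existence-uniqueness-via-signs} it coincides with the unique (positive) steady state of the class, yielding (1)--(3) at once.

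The first two hypotheses are quickly dispatched. For hypothesis (1), Lemma~\ref{lem:rank-gamma} gives $\rank\Gamma = 2n+1 = m-1$, so $\ker\Gamma$ is one-dimensional; since the forward reactions are oriented consistently around the single cycle of the translated network~\eqref{eq:augmented_network_proc}, the reaction vectors telescope to zero, i.e.\ $\Gamma (1, \ldots, 1)^t = 0$, so $\ker\Gamma$ is spanned by the positive vector $(1, \ldots, 1)^t$. For hypothesis (2), the conservation laws of Remark~\ref{rmk:cons_laws} express $K_{\mathrm{tot}}$, $F_{\mathrm{tot}}$, and $S_{\mathrm{tot}}$ as sums of nonnegative concentrations, so each coordinate of a trajectory is bounded by these conserved totals; equivalently each compatibility class is compact, and all trajectories are bounded.

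The crux is hypothesis (3): strong monotonicity of the reaction-coordinate system $\dot c = f_\sigma(c) = R(\sigma + \Gamma c)$ from~\eqref{eq:R-system}. I would compute the Jacobian $J(c) = R'(x)\,\Gamma$ at $x = \sigma + \Gamma c$, whose $(i,j)$-entry is $\nabla_x R_i \cdot r_j$ with $r_j$ the $j$-th column of $\Gamma$. The central claim is that, throughout the state space where $x \geq 0$, every off-diagonal entry of $J$ is nonnegative, so that $J$ is a Metzler matrix and the system is cooperative with respect to the standard nonnegative orthant---exactly the order in Definition~\ref{def:monotone}, with no orthant change needed. The sign bookkeeping organizes itself along the cycle: running a forward reaction consumes the shared intermediate that is the product of its predecessor and produces the one feeding its successor, so each coupling raises the neighboring net rate (for instance $\partial R_1/\partial c_2 = k_2 > 0$ and $\partial R_{n+2}/\partial c_{n+1} = \ell_{2n+1} x_2 \geq 0$), and the two enzyme ``chords'' through $K$ and $F$ likewise contribute nonnegative entries. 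I would then check irreducibility of $J$ on the interior $\{x > 0\}$: the bidirectional couplings within each enzyme branch, together with the links $n+1 \to n+2$ and $2n+2 \to 1$ and the enzyme chords, make the associated digraph strongly connected, so the cooperative system is strongly monotone.

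With all three hypotheses verified, Lemma~\ref{lem:AS} delivers the unique globally attracting $\eta_\sigma$ in each compatibility class, establishing (3) and re-deriving (1)--(2). I expect the genuine difficulty to lie entirely in hypothesis (3): carrying out the sign analysis of $J$ uniformly in $n$, and handling the boundary of the state space, where the $x$-dependent entries vanish and irreducibility degenerates, so that strong monotonicity of the flow for $t > 0$ must still be secured from irreducibility on the interior via the standard monotone-systems arguments.
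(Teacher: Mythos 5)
Your route is the same as the paper's: apply the Angeli--Sontag result (Lemma~\ref{lem:AS}) to the reaction-coordinate system~\eqref{eq:R-system}, verifying its three hypotheses. Your treatment of hypotheses (1) and (2) is correct (indeed, your observation that $\Gamma(1,\dots,1)^t=0$, which follows from the vanishing row sums noted in the proof of Lemma~\ref{lem:rank-gamma}, makes the positive-kernel requirement explicit, which the paper leaves tacit), and your verification of monotonicity via the Metzler structure of ${\rm Jac}_x R(z)\,\Gamma$ on $X_\sigma$ matches the paper. The gap is in strong monotonicity. Irreducibility of this Jacobian fails exactly where $z_1=0$ or $z_2=0$ (zero free kinase or free phosphatase): the only entries coupling the block $\{1,\dots,n+1\}$ to the block $\{n+2,\dots,2n+2\}$ are $k_1z_1$ and $\ell_{2n+1}z_2$. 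Proving irreducibility on the interior and then invoking ``standard monotone-systems arguments'' does not close this, because the state space $X_\sigma$ contains points with $z_1=0$ (e.g.\ all kinase bound in enzyme-substrate complexes), Lemma~\ref{lem:AS} needs strong monotonicity of the flow on $X_\sigma$, and no off-the-shelf theorem upgrades interior irreducibility to strong monotonicity: one must rule out that trajectories spend a positive amount of time in the degenerate set, where the linearization decouples and strict ordering can fail to propagate.

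That exclusion is precisely the nontrivial, network-specific part of the paper's proof: it shows $k_1z_1$ and $\ell_{2n+1}z_2$ are \emph{almost everywhere nonzero along trajectories} (the form of Smith's theorem actually cited). Concretely, if $z_1(t)\equiv 0$ on a time interval, then by~\eqref{eq:def_R_proc} one has $0\equiv \dot c_1-\dot c_{n+1} = -k_2z_5 - k_{2n+1}z_{2n+3}$, forcing $z_5\equiv z_{2n+3}\equiv 0$; an induction along the kinase chain then gives $z_{2i+5}\equiv 0$ for all $i$, and summing yields $0\equiv z_1+z_5+z_7+\cdots+z_{2n+3}$, which equals the total (free plus bound) kinase in $\sigma$ and is strictly positive --- a contradiction. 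This argument uses the specific rate function and the conservation law for total kinase (Remark~\ref{rmk:cons_laws}); it is not a citation to general theory. Since your proposal flags this difficulty but defers it rather than resolving it, your verification of hypothesis (3) is incomplete as written.
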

\begin{proof}
Let $\sigma \in \invtPoly$.  
The result will follow from Lemma~\ref{lem:AS} applied to this reaction system and the vector~$\sigma$, once we verify its three hypotheses.  

  For hypothesis (1) we note that the rank of $\Gamma$ is $(2n+2)-1$ by
  Lemma~\ref{lem:rank-gamma}.

For hypothesis~(2) of Lemma~\ref{lem:AS}, every stoichiometric compatibility class is bounded due to the conservation laws
(cf.\ Remark~\ref{rmk:cons_laws}).
Thus, trajectories of~\eqref{eq:ODE} are bounded.  

Finally, we must verify that the system~\eqref{eq:R-system} is strongly monotone.  We begin by showing that it is monotone with respect to the nonnegative orthant.  It suffices (by Proposition 1.1 and Remark~1.1 in~\cite[\S 3.1]{SmithBook}) to show that the Jacobian matrix of $f_{\sigma}(c):=R(\sigma + \Gamma c)$ with respect to $c$ has nonnegative off-diagonal entries for all $c \in X_{\sigma}$.  
Note that this reaction rate function $R$ appeared earlier in~\eqref{eq:def_R_proc}.
For simplicity, we introduce $z:=\sigma + \Gamma c$, so by the chain rule, the 
Jacobian matrix of $f_{\sigma}(c):=R(\sigma + \Gamma c)$ with respect to $c$ is
${\rm Jac}_c f_{\sigma}(c)  = {\rm Jac}_x R(z)
\, \Gamma$, which
from~(\ref{eq:def_Gamma_proc}--\ref{eq:def_R_proc}) is  
 the following $(2n+2) \times (2n+2)$-matrix:

\begin{align}
\label{eq:Jacobian_any_n}
    \left[  
    \begin{array}{c}
    \left( -k_1 \left( z_3+ z_1 \right) - k_2 \right) e_1 + k_2 e_2 + k_1 z_3 e_{n+1} +k_1 z_1 e_{2n+2} \\
        \hline
    k_3 e_1 - (k_3 +k_4) e_2 + k_4 e_3 \\
    k_5 e_2 - (k_5 +k_6) e_3 + k_6 e_4 \\
    \vdots \\
    k_{2n-1} e_{n-1} - (k_{2n-1} +k_{2n}) e_n + k_{2n} e_{n+1} \\
        \hline
    k_{2n+1} e_n - k_{2n+1} e_{n+1} \\
        \hline
    \ell_{2n+1} z_2 e_{n+1} +
    \left( -\ell_{2n+1} \left( z_4+ z_2 \right) - \ell_{2n} \right) e_{n+2} +
    \ell_{2n} e_{n+3} + \ell_{2n+1} z_4 e_{n+2}  \\
        \hline
    \ell_{2n-1} e_{n+2} - (\ell_{2n-1} +\ell_{2n-2}) e_{n+3} + \ell_{2n-2} e_{n+4} \\
    \ell_{2n-3} e_{n+3} - (\ell_{2n-3} +\ell_{2n-4}) e_{n+4} + \ell_{2n-4} e_{n+5} \\
    \vdots \\
    \ell_{3} e_{2n} - (\ell_{3} +\ell_{2}) e_{2n+1} + \ell_{2} e_{2n+2} \\
        \hline
    \ell_{1} e_{2n+1} - \ell_{1} e_{2n+2}
    \end{array}
    \right]~.
\end{align}
  
By inspection of the Jacobian matrix~\eqref{eq:Jacobian_any_n}, each nonzero off-diagonal entry either is some $\ell_i$ or $k_j$, which is strictly positive, or has the form 
$k_j z_i$ or $\ell_j z_i$ (for some $i$) 
and such a term is nonnegative for $c \in X_{\sigma}$
(recall that the 
system~\eqref{eq:R-system} 
evolves on the space $X_{\sigma}$ defined in~\eqref{eq:X-sigma-state-space}, so 
$z=\sigma + \Gamma c \in \Rnn^{2n+2}$. 
  
Now we show that the system~\eqref{eq:R-system} is strongly monotone by checking that the Jacobian matrix~\eqref{eq:Jacobian_any_n} is almost everywhere irreducible along trajectories of~\eqref{eq:R-system} (see Theorem 1.1 of~\cite[\S 4.1]{SmithBook}), i.e., that the matrix is almost everywhere the adjacency matrix of a strongly connected directed graph. 
By inspection of~\eqref{eq:Jacobian_any_n}, this directed graph always contains the edges $1 \leftrightarrow 2 \leftrightarrow  \cdots \leftrightarrow n+1$ and $n+2 \leftrightarrow n+3 \leftrightarrow \cdots \leftrightarrow 2n+2$ (because $k_i, \ell_i >0$ for all $i$), and the only possible edges between these two components are $1 \to 2n+2$ and $n+2 \to n+1$, so we must show that the corresponding two entries in the matrix~\eqref{eq:Jacobian_any_n}, namely $k_1 z_1 = k_1 (\sigma_K-c_1+c_{n+1})$ and $\ell_{2n+1} z_2 = \ell_{2n+1} (\sigma_F-c_{n+2}+c_{2n+2})$, are almost everywhere nonzero along trajectories.
  
By symmetry between $K$ and $F$, we need only verify the first case.  
We proceed by contradiction: assume that $z_1(t) = \sigma_K-c_1(t)+c_{n+1}(t) \equiv 0$ for a positive amount of time $t$ along a trajectory $c(t)$ of~\eqref{eq:R-system}.  So, 
using~\eqref{eq:def_R_proc}, 
this subtrajectory satisfies:
	\begin{align*}
	0 \equiv \dot c_1(t) - \dot c_{n+1}(t) = (0 - k_{2}z_{5}(t) ) - k_{2n+1} z_{2n+3}(t) ~.
	\end{align*}

But, $z_{5} (t)  \geq 0$ and $z_{2n+3} (t) \geq 0$, so both must equal zero for the above to hold.  Additionally, we conclude that $\dot c_1(t) \equiv 0$ and $\dot c_{n+1}(t) \equiv 0$.
Hence, the base case is complete for showing by induction on $i=0,1,\dots,n-1$ that
	\begin{align} \label{eq:zeros_we_get}
	    z_{2i+5} (t)  \equiv 0 \quad {\rm and }  \quad \dot c_{i+1}(t) \equiv 0~.
	\end{align}
For the $i$-th step, we use the inductive hypothesis
(namely, $z_{2i+3}(t)  = \sigma_{2i+3} + c_i(t) - c_{i+1}(t) \equiv 0$ and $\dot c_{i}(t) \equiv 0$)
to obtain:
\begin{align*}
0 \equiv \dot c_i(t) - \dot c_{i+1}(t) = 0 - 
	\left( 0 - k_{2i+2} z_{2i+5}(t) \right) ~.
\end{align*}
Thus, the desired equalities~\eqref{eq:zeros_we_get} hold. 
Hence,
\begin{align} \label{eq:long_sum}
0 = 0+ \cdots +0 &\equiv z_1(t) +z_{5}(t) + z_{7}(t) + \cdots + z_{2n+3}(t) \notag \\
    &= 
    \left( \sigma_{1} - c_{1} + c_{n+1}  \right) +
    \left(   \sigma_{5} + c_{1} - c_{2} \right) +
    \cdots +
    \left(  \sigma_{2n+3} + c_{n} - c_{n+1} \right) \notag \\
    &=
    \sigma_1 + (\sigma_{5} + \sigma_7 + \cdots + \sigma_{2n+3} ) >0~,
\end{align}
where the inequality in~\eqref{eq:long_sum} follows because the  sum in~\eqref{eq:long_sum} represents the total 
(free and bound)
amount of kinase $K$ present in the initial condition $\sigma$, which must be strictly positive in order for $\sigma \in \invtPoly$ 
(recall Remark~\ref{rmk:cons_laws}).
  Thus, we obtain a contradiction, and this completes the proof.
%
\end{proof}

\section{Discussion}
\label{sec:discussion}
In this section, we comment on related works.  
The following three remarks highlight alternative methods to the one taken here for precluding multistationarity in processive networks.
For an overview of known methods for assessing multistationarity in reaction kinetics systems, see the introduction of~\cite{simplifying}.  
For a historical survey of experimental and theoretical findings concerning multistationarity, see the book of Marin and Yablonsky~\cite[Chapter 8]{MY}.  


\begin{remark} \label{rmk:DSR}
Readers familiar with ``directed species-reaction graphs'' (DSR graphs) can verify that the DSR graph arising from the processive multisite network~\eqref{eq:network_processive} satisfies Banaji and Craciun's condition (*) in \cite{BanajiCraciun2009} and thereby conclude that processive systems do not admit multistationarity. 
\end{remark}

\begin{remark} \label{rmk:def_one_alg}
Another approach to ruling out multistationarity in processive systems is via the Deficiency One Algorithm due to Feinberg. 
Namely, one could apply 
a criterion for multistationarity of regular deficiency-one networks~\cite[Corollary 4.1]{Fein95DefOne}, determine that the resulting system of inequalities is infeasible, and then conclude that multiple steady states are precluded.  Indeed, for small $n$, this can be verified by the CRN Toolbox software~\cite{Toolbox}.
\end{remark}

\begin{remark} \label{rmk:F_Wiuf}
A third approach to analyzing processive systems is to use the recent work of Feliu and Wiuf~\cite{FeliuWiuf}.  Namely, in their notation, each processive $n$-site network~\eqref{eq:network_processive} is an ``extension model'' of the following ``core model'' network:
$$ S_0 + E \rightarrow S_n + E \quad \quad S_n + F \rightarrow S_0 + F~. $$
The corresponding ``canonical model'' is obtained by adding the reactions $X \rightleftharpoons S_0+E $ and $S_n+F \rightleftharpoons Y $.  This
canonical model
can be determined to preclude multistationarity, via the CRN Toolbox software~\cite{Toolbox} (which applies the Deficiency One Algorithm~\cite{Fein95DefOne} in this case) or the online software tool CoNtRol~\cite{control} (which applies injectivity criteria of Banaji and Pantea~\cite{BP}).  
Corollary~6.1 in the work of Feliu and Wiuf states that if a canonical model admits at most $N$ steady states, then every extension model of the core model also admits no more than $N$ steady states.
So, that corollary allows us to conclude 

that the entire family of processive $n$-site networks~\eqref{eq:network_processive} also precludes multistationarity.
Also, their results can give information about the stability of the resulting unique steady states.  However, 
even if we could readily apply Proposition 2 in the Data Supplement of that work, we would obtain only local stability.  
In Section~\ref{sec:convergence}, we 
accomplished
 the stronger result of global stability by appealing to monotone systems theory.
\end{remark}
%

The next two remarks relate our convergence result to other such results.

\begin{remark} \label{rmk:DB}
As explained before Theorem~\ref{thm:global_conv}, our result extends the convergence result for 1-site systems due to Angeli and Sontag.  An alternate proof of convergence of the 1-site network is due to Donnell and Banaji~\cite[Example 3]{DB}, but their argument does not extend to $n$-site systems.
\end{remark}

\begin{remark} \label{rmk:convergence-results}
As mentioned earlier, Theorem~\ref{thm:global_conv} is one of many results proving the global convergence of various reaction systems by way of monotone systems theory~\cite{ADS10,BanajiM,DB}.  As a complement to monotone systems theory, other approaches to obtaining convergence theorems for reaction systems have been aimed at resolving the so-called Global Attractor Conjecture and related conjectures.  An overview of such recent results appears in~\cite[\S 1.1]{Anderson} and~\cite[\S 4]{GMS2}.  However, the aforementioned conjectures and results do not apply to the processive systems considered in our work.
\end{remark}
 
Finally, we identify other families of multisite systems for further study.
\begin{remark} \label{rmk:mixed_etc.}
As discussed earlier, many works have analyzed {\em distributive} multisite systems, in contrast with the {\em processive} versions analyzed in our work.
We now make note of two additional families of multisite systems.  The first is the class of {\em mixed} systems, in which the phosphorylation mechanism is processive and the dephosphorylation mechanism is distributive (or vice-versa); the $n=2$ case was considered in~\cite[\S 2.2]{ConradiUsing}.  
We conjecture that, like 
processive
 systems, 
mixed systems taken with mass-action kinetics admit a unique (positive) steady state in each stoichiometric compatibility class, and that this steady state is a global attractor.  The online software tool CoNtRol~\cite{control} verifies that for small $n$, steady states are unique because the systems satisfy certain injectivity criteria~\cite{BP}.  
As for convergence, the proof of Theorem~\ref{thm:global_conv} can not easily be modified to analyze mixed systems, so global convergence (if it holds) must be proved in another way.  We note that a related version of the mixed $2$-site system was considered by Gunawardena in~\cite{Guna}.  

A second potentially interesting class of networks arises when 
phosphorylation proceeds by a {\em semi-processive} mechanism~\cite[\S 4.2]{PM}, in which the kinase is capable of catalyzing the attachment of more than one phosphate group per binding event, but the maximum number of phosphate groups is not attached each time.  
Indeed, macromolecular crowding \cite{Ellis} causes distributive systems to function in a semi-processive manner~\cite{aoki}.  

We leave this class as a topic for future work. 
\end{remark}

\vskip .2in
\noindent \textbf{{\large Acknowledgments.}}
We thank  Murad Banaji and Pete Donnell for directing us to the relevant monotone systems literature.  We also thank Matthew Johnston for helpful discussions, 
	and two conscientious referees whose comments improved this work.

\bibliographystyle{amsplain}
\bibliography{multistationarity}

\appendix


\section{Obtaining the nullspace of $A^t_\kappa$ from
  (\ref{eq:augmented_network_proc})}
\label{sec:coefficients}

Here we focus on the nullspace of $A^t_\kappa$ and explain how it can
be obtained by studying the directed graph underling
network~(\ref{eq:augmented_network_proc}), given in
Fig.~\ref{fig:directed_graph} below.
\begin{figure}[!h]
  \centering
  \begin{minipage}{1.0\linewidth}
    \scalebox{0.8}{
      \begin{minipage}{1.0\linewidth}
        \begin{displaymath}
          \xymatrix@C=3ex{
            \overset{1}{\bullet} 
            \ar @<.4ex> @{-^>} [r] 
            &
            \ar @{-^>} [l] 
            \overset{2}{\bullet}
            &
            {\phantom{K}\hdots\phantom{K}}
            &
            \ar @<.4ex> @{-^>} [r] 
            &
            \ar @{-^>} [l] 
            \ar @<.4ex> @{-^>} [r] 
            &
            \ar @{-^>} [l] 
            &
            {\phantom{K}\hdots\phantom{K}}
            &
            \ar @<.4ex> @{-^>} [r] 
            &
            \ar @{-^>} [l] 
            \ar @<.4ex> @{-^>} [r] 
            &
            \ar @{-^>} [l] 
            &
            {\phantom{K}\hdots\phantom{K}}
            &
            \overset{n+1}{\bullet}
            \ar [d] ^{k_{2n+1}}
            \\
            \ar [u] ^{\ell_1}
            \underset{2n+2}{\bullet}
            &
            {\phantom{K}\hdots\phantom{K}}
            &
            \ar @<.4ex> @{-^>} [r] 
            &
            \ar @{-^>} [l] 
            \ar @<.4ex> @{-^>} [r] 
            &
            \ar @{-^>} [l] 
            &
            {\phantom{K}\hdots\phantom{K}}
            &
            \ar @<.4ex> @{-^>} [r] 
            &
            \ar @{-^>} [l] 
            \ar @<.4ex> @{-^>} [r] 
            &
            \ar @{-^>} [l] 
            &
            {\phantom{K}\hdots\phantom{K}}
            &
            \underset{n+3}{\bullet}
            \ar @<.4ex> @{-^>} [r] 
            &
            \ar @{-^>} [l] 
            \underset{n+2}{\bullet}
          }
        \end{displaymath}
      \end{minipage}
    }
  \end{minipage}
  \caption{Directed graph $G$ underlying the translated network~(\ref{eq:augmented_network_proc})}
  \label{fig:directed_graph}
\end{figure}
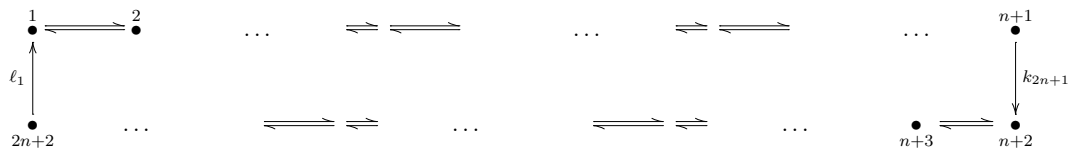

 
\medskip
\noindent
{\bf Notation} ($G^*$).  For a directed graph~$G$, we let $G^*$ denote
the undirected graph obtained from $G$ by making each directed edge
undirected (and allowing multiple edges in the resulting graph).
 
\begin{definition}[Directed spanning tree / spanning tree rooted at
  node $j$] \label{def:directed_sp_tree} \mbox{} \\
  Let $j$ be a node of a directed graph $G$.
  A subgraph $T$ is a {\em spanning tree} (of $G$) {\em rooted at $j$}, if it
  satisfies the following:
  \begin{enumerate}[{(}a{)}]
  \item $T$ contains all nodes of $G$,
  \item the undirected graph $T^*$ is acyclic and connected, and 
  \item for every node $v\neq j$ of $T$, there exists a directed path
    from $v$ to $j$.
  \end{enumerate}
  A subgraph is a {\em directed spanning tree} of $G$ if it is a
  spanning tree rooted at $j$, for some node $j$.
\end{definition}

\begin{remark}
  In a directed graph, a {\em sink} is a node that has no outgoing edges. 
  For 
  a spanning tree rooted at $j$, the unique
  sink is the node $j$. Any acyclic and connected subgraph that
  contains more than one sink is not a directed spanning tree.
\end{remark}

{
  Next we identify the directed spanning trees of $G$ from
  Fig.~\ref{fig:directed_graph}.  
  Note that $G$
  is cyclic, and due to the unidirectional edges labeled $k_{2n+1}$ and
  $\ell_1$, $G$ can be traversed in the clockwise direction
  only.
}

\begin{remark}[Acyclic, connected subgraphs of $G$ {from
    Fig.~\ref{fig:directed_graph}}] \label{rem:orientation} \mbox{} \\
  For a subgraph $T$ of $G$ that contains all nodes of $G$, 
  the undirected graph $T^*$ is acyclic and connected 
  if and only if $T$ satisfies the 
  following properties (cf.\ Fig.~\ref{fig:np_graph}):
  \begin{enumerate}[{(}i{)}]
  \item there is a unique node $p$ such that $T$ contains neither the
    edge $p \rightarrow p+1$ nor the edge $p \leftarrow p+1$ (where
    $p+1:=1$ if $p=2n+2$). 
  \item for all other nodes $q \neq p$, exactly one of the edges $q
    \rightarrow q+1$ and $q \leftarrow q+1$ is present in $T$. 
  \end{enumerate}
\end{remark}

\begin{figure}[!htb]
  \centering
  \includegraphics[width=0.8\linewidth]{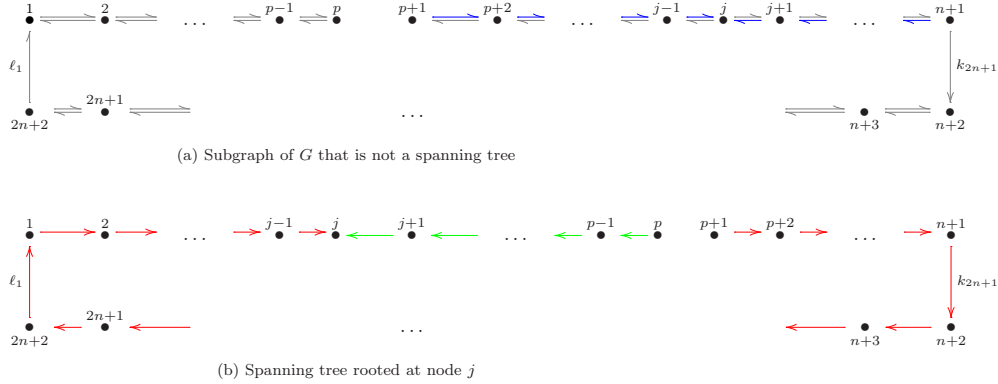}
  \caption{\label{fig:np_graph} (a) Subgraph obtained from $G$ by removing
    edges $p \to p+1$ and $p \leftarrow p+1$. 
    To obtain a subgraph $T$ for which the undirected graph $T^*$ is
    acyclic and connected, choose one edge from each gray pair of
    reversible edges. By choosing all the blue edges, one obtains two
    directed paths ending at $j$: one connecting the nodes $p+1$,\ldots, $j-1$ to $j$ and the
    other connecting  $j+1$, \ldots, $n+1$ to $j$. No choice of edges,
    however, will connect any of the following nodes to $j$: $n+2$,
    \ldots, $2n+2$ and $1$, \ldots, $p$.  Thus, any such subgraph
    will have at least two sinks. (b) Spanning tree $T_{j,p}$ (of $G$ from
    Fig.~\ref{fig:directed_graph}) rooted at $j$; this tree consists of two
    paths, one from $p$ to $j$ (green, counter-clockwise) and one from
    $p+1$ to $j$ (red, clockwise).} 
\end{figure}

Now we can determine the directed spanning trees of $G$ (recall
Definition~\ref{def:directed_sp_tree}): 
\begin{proposition}[Directed spanning trees of
  $G$ {from Fig.~\ref{fig:directed_graph}}] \label{prop:directed_sp_trees} \mbox{} \\
  For the directed graph $G$ in Fig.~\ref{fig:directed_graph},
  let $j$ and $p$ be integers such that
  \begin{align} \label{eq:condition_j_p}
    1 \leq j \leq p \leq n+1 \quad {\rm or } 
    \quad n+2 \leq j \leq p \leq 2n+2~.
  \end{align}
  Let $T_{j,p}$ be the subgraph of $G$ that contains all nodes of $G$
  and for which the edges are comprised of: 
  \begin{enumerate}[{(}1{)}]
  \item\label{item:2a} if $j \neq n+1, 2n+2$:
    \begin{enumerate}[{(}A{)}]
    \item the clockwise path from node $p+1$ to $j$, and
    \item the counter-clockwise path from $p$ to $j$ (cf.\
      Fig.~\ref{fig:np_graph}(b)).  
    \end{enumerate}
  \item\label{item:2b} if  $j=n+1$ or $j=2n+2$, the clockwise path
    from node $j+1$ to $j$ (where $j+1:=1$ if $j=2n+2$).
  \end{enumerate}
  Then $T_{j,p}$ is a directed spanning tree rooted at node
  $j$ that does not contain the edges $p \to p+1$ or $p \leftarrow
  p+1$ (where $p+1:=1$ if $p=2n+2$). Conversely, every spanning tree
  of $G$ has this form.
\end{proposition}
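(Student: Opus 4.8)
The plan is to prove both implications of the equivalence, using throughout the combinatorial description of acyclic connected spanning subgraphs in Remark~\ref{rem:orientation} together with the three defining conditions (a)--(c) of Definition~\ref{def:directed_sp_tree}. The undirected graph $G^*$ is a single $(2n+2)$-cycle (with some doubled edges), so deleting one edge-pair turns it into a path; the entire argument comes down to tracking how the two clockwise-only unidirectional edges $n+1\to n+2$ and $2n+2\to 1$ (occurring at $p=n+1$ and $p=2n+2$ in the indexing of Remark~\ref{rem:orientation}) constrain which orientations are available.

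For the forward direction, I fix $j$ and $p$ satisfying~\eqref{eq:condition_j_p} and check (a)--(c) for $T_{j,p}$. Condition (a) is immediate: the counter-clockwise path from $p$ to $j$ covers the nodes $j,j+1,\dots,p$, and the clockwise path from $p+1$ to $j$ covers the remaining nodes $p+1,\dots,2n+2,1,\dots,j$. For (b), I note that $T_{j,p}$ contains neither edge between $p$ and $p+1$ and contains exactly one orientation at every other node, so $T_{j,p}^*$ is acyclic and connected by Remark~\ref{rem:orientation}. Condition (c) holds because each node on the counter-clockwise arc reaches $j$ along backward (reversible) edges and each node on the clockwise arc reaches $j$ along forward edges; the constraint~\eqref{eq:condition_j_p} is exactly what guarantees that the backward edges used here all lie within a single row, so that none of them is a unidirectional edge traversed against its orientation (cf.\ Fig.~\ref{fig:np_graph}(b)). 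In the degenerate cases $j=n+1$ or $j=2n+2$, where~\eqref{eq:condition_j_p} forces $p=j$, the subgraph $T_{j,p}$ collapses to a single clockwise Hamiltonian path ending at $j$, and the same checks apply.

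For the converse, let $T$ be any directed spanning tree, rooted at some node $j$. Since $T$ is acyclic and every node has a directed path to $j$, the node $j$ must be the unique sink. By Definition~\ref{def:directed_sp_tree}(b) and Remark~\ref{rem:orientation}, there is a unique node $p$ carrying neither of the edges $p\to p+1$, $p\leftarrow p+1$, and exactly one orientation at each other node; deleting $p$ exhibits $T^*$ as a path with endpoints $p$ and $p+1$. On a path, a unique sink forces all edges to point toward it, so the edges between $p+1$ and $j$ are clockwise and those between $p$ and $j$ are counter-clockwise. This is precisely the edge set of $T_{j,p}$, and it remains only to derive~\eqref{eq:condition_j_p}. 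The counter-clockwise edges are the backward orientations of the reversible edges among the nodes $j,j+1,\dots,p$, and each must genuinely exist in $G$; since the only clockwise-only edges are $n+1\to n+2$ and $2n+2\to 1$, the node-interval $\{j,\dots,p\}$ cannot straddle either of these, which forces $j$ and $p$ to lie in the same row with $j\le p$. Equivalently, if $j$ is a row boundary with an outgoing unidirectional edge (that is, $j=n+1$ or $j=2n+2$), then that clockwise-only edge cannot be incoming to $j$, so its position must be deleted, giving $p=j$.

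I expect the necessity half of the converse to be the main obstacle: translating the global condition ``every node has a directed path to $j$'' into the sharp arithmetic constraint~\eqref{eq:condition_j_p}. The crux is that the two clockwise-only edges cannot be reoriented, so placing $p$ and $j$ in different rows, or with $j>p$ in the same row, strands a block of nodes that can be left only by crossing a unidirectional edge clockwise into the dead end created by the deleted node $p$; these nodes then have no directed route to $j$, as illustrated in Fig.~\ref{fig:np_graph}(a). Making this trapped-block argument precise, rather than the routine spanning-tree verifications of the forward direction, is where the real work lies.
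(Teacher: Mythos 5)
Your proof is correct, and its converse half follows a genuinely different route from the paper's. The forward directions essentially coincide (a routine check of Definition~\ref{def:directed_sp_tree} via Remark~\ref{rem:orientation}; you are in fact more explicit than the paper about why condition~\eqref{eq:condition_j_p} guarantees that the counter-clockwise edges of $T_{j,p}$ actually exist in $G$). In the converse, both arguments begin by invoking Remark~\ref{rem:orientation} to locate the unique missing adjacency at $p$, but then diverge. The paper first proves~\eqref{eq:condition_j_p} by contradiction -- if $p$ violates it, then some node (either $p$ or $n+2$, after reducing to the case $1\le j\le n+1$ by symmetry) has no directed path to $j$, cf.\ Fig.~\ref{fig:np_graph}(a) -- and only afterwards recovers the edge set, using the fact that the paths (A) and (B) are the unique paths in $G$ from $p+1$ and from $p$ to $j$ that avoid the deleted adjacency. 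You invert this order: you first force the orientation structure by a local counting argument (on the Hamiltonian path obtained by removing the adjacency between $p$ and $p+1$, a unique sink forces every edge to point toward it, since each of the $2n+1$ non-sink nodes needs an outgoing edge and there are exactly $2n+1$ edges), and only then read off~\eqref{eq:condition_j_p} as an edge-existence statement: the forced counter-clockwise edges lie in $T\subseteq G$, so the clockwise interval from $j$ to $p$ can contain neither $n+1\to n+2$ nor $2n+2\to 1$. Your route dispenses with the case analysis and the reachability contradiction, and it isolates precisely how the two irreversible edges constrain $(j,p)$; the paper's route is shorter and more pictorial, leaning on Fig.~\ref{fig:np_graph} and on uniqueness of paths in $G$. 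One small note: the worry in your final paragraph is moot -- the ``trapped block'' argument you sketch there is exactly the paper's method, but your own sink-forcing argument already delivers~\eqref{eq:condition_j_p} without it; the only step deserving an explicit line in a final write-up is the sink-forcing fact itself (the out-degree count above), together with cleaner phrasing of ``deleting $p$'' as removing the adjacency between $p$ and $p+1$ rather than the node $p$.
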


\begin{proof}
  Assume that $T_{j,p}$ is a subgraph as described in the proposition.
  By Definition~\ref{def:directed_sp_tree} and Remark~\ref{rem:orientation},
  it remains only to show that there exists a path from every node $v\neq
  j$ to $j$. Indeed, by points (\ref{item:2a}) and
  (\ref{item:2b}), every node belongs to a path that ends in $j$.
  
  Conversely, let $T$ be a spanning tree of $G$ rooted at $j$.  By 
  Remark~\ref{rem:orientation}, there exists a node $p$ such that 
  $T$ contains neither $p \to p+1$ nor $ p \leftarrow p+1$, so it 
  suffices to check that condition~\eqref{eq:condition_j_p} holds and the 
  edges of $T$ satisfy points~\eqref{item:2a} and~\eqref{item:2b}. 
  We first assume that $p$ violates
  condition~\eqref{eq:condition_j_p}. By symmetry between the two
  cases, we need only consider the case when $1 \leq j \leq n+1$ and
  $p\in \{1, \dots, j-1\} \cup \{n+2, \dots, 2n+2\}$. If $p\in \{1,
  \dots, j-1\}$, then there is no path in $T$ from $p$ to $j$;
  similarly, if $p \in \{n+2, \dots, 2n+2\}$, then there is no path
  from $n+2$ to $j$ (cf.\ Fig.~\ref{fig:np_graph}). Thus, $T$ is not a
  spanning tree rooted at $j$, which is a contradiction. Thus, $T$
  must satisfy condition~\eqref{eq:condition_j_p}, so it remains only
  to show that it must satisfy points~\eqref{item:2a} and
  \eqref{item:2b} as well.  Indeed in the first case (that is, if $j
  \neq n+1, 2n+2$), the paths (A) and (B) are the unique paths in $G$
  that do not use $p \to p+1$ to reach $j$ from $p+1$ and $p$,
  respectively, and all nodes except $j$ lie on exactly one of these
  paths, so the two paths comprise the edges of $T$.  Similarly, in
  the remaining case (if $j=n+1$ or $j=2n+2$), the clockwise path from
  node $j+1$ to $j$ is the unique path in $G$ from $j+1$ to $j$, and
  all nodes lie along the path (note that $j=p$ in this case). This
  completes the proof.
\end{proof}

We note the following corollary of Proposition~\ref{prop:directed_sp_trees}:
\begin{corollary} \label{cor:number_spanning_trees}
  For the directed graph $G$ in Fig.~\ref{fig:directed_graph}, the
  number of spanning trees rooted at $j$ is
  \begin{itemize}
  \item $n+2-j$, if $j\in\{1$, \ldots, $n+1\}$ 
  \item $2n+3 -j$, if $j\in\{n+2$, \ldots, $2n+2\}$.
  \end{itemize}
  Consequently the number of spanning trees rooted at $j$ is at most
  $n+1$.
\end{corollary}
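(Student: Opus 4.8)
The plan is to derive the count directly from the classification of directed spanning trees established in Proposition~\ref{prop:directed_sp_trees}. That proposition shows that every directed spanning tree of $G$ rooted at $j$ is of the form $T_{j,p}$ for a unique integer $p$ satisfying condition~\eqref{eq:condition_j_p}, and conversely that each admissible pair $(j,p)$ yields such a tree. Hence, for a fixed root $j$, the spanning trees rooted at $j$ are in bijection with the admissible values of $p$, and the problem reduces to a counting exercise: I simply tally the integers $p$ permitted by~\eqref{eq:condition_j_p} once $j$ is fixed.

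First I would fix $j \in \{1, \dots, n+1\}$. Then~\eqref{eq:condition_j_p} requires $j \le p \le n+1$, so the admissible values of $p$ are $j, j+1, \dots, n+1$, of which there are $(n+1)-j+1 = n+2-j$. Symmetrically, for $j \in \{n+2, \dots, 2n+2\}$, condition~\eqref{eq:condition_j_p} requires $j \le p \le 2n+2$, giving $(2n+2)-j+1 = 2n+3-j$ admissible values. These two counts are exactly the quantities asserted in the corollary.

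For the final (``consequently'') claim, I would maximize each count over its range of $j$. On the first range, $n+2-j$ is decreasing in $j$ and attains its maximum $n+1$ at $j=1$; on the second range, $2n+3-j$ is likewise decreasing and attains its maximum $n+1$ at $j=n+2$. Thus the number of spanning trees rooted at any single node never exceeds $n+1$.

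Since the substantive combinatorial content---the bijection between rooted spanning trees and admissible pairs $(j,p)$---is already furnished by Proposition~\ref{prop:directed_sp_trees}, there is no genuine obstacle here; the corollary is an immediate consequence. The only point requiring a moment's care is the boundary behavior at $j=n+1$ and $j=2n+2$, where Proposition~\ref{prop:directed_sp_trees} describes the trees via the special case~(\ref{item:2b}) rather than the generic paths~(\ref{item:2a}). There the constraint~\eqref{eq:condition_j_p} forces $p=j$, so exactly one tree exists; I would verify that this agrees with the formulas above (indeed $n+2-(n+1)=1$ and $2n+3-(2n+2)=1$), confirming that the counting is uniform across the generic and boundary cases.
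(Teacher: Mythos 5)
Your proof is correct and follows essentially the same route the paper intends: the corollary is exactly the count of admissible values of $p$ in condition~\eqref{eq:condition_j_p} for fixed $j$, using the bijection between rooted spanning trees and pairs $(j,p)$ furnished by Proposition~\ref{prop:directed_sp_trees}. Your check of the boundary cases $j=n+1$ and $j=2n+2$ is a nice confirmation but introduces nothing beyond the paper's argument.
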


  Now we turn to the kernel of $A_\kappa^t$. In
  Corollary~\ref{cor:rk_A-kappa}, we argued that $\ker(A_\kappa^t)$ is
  spanned by a positive vector. This is a consequence of
  \cite[Lemma~2]{TG}, which built on the well-known Matrix-Tree
 Theorem of algebraic combinatorics~\cite[\S 5.6]{Stanley2}, 
 and also gives an explicit formula for this vector. For this, we need some more notation:
  \\
  \noindent
  {\bf Notation}.
  Following \cite{TG}, 
  for a directed spanning 
  tree $T$ of an edge-labeled directed graph $G$, we denote
  by $L(T)$ the product of all edge labels in the spanning
  tree $T$:
  \begin{equation}
    \label{eq:def_L}
    L(T) := \prod_{y_i \overset{a}{\to} y_j \in T} a\ .
  \end{equation}
  Note that $L(T) > 0$, as it is a product of rate constants. 
  %
  \begin{proposition} \label{prop:rho}  
    Recall the spanning trees $T_{j,p}$ of $G$ from
    Fig.~\ref{fig:directed_graph}. For the matrix $\tilde
    A_{\kappa}^t$ displayed in~\eqref{eq:def_Ak_aug} for the
    translated network~\eqref{eq:augmented_network_proc}, the
    nullspace is spanned by the positive vector $\rho \in
    \mathbb{R}_+^{2n+2}$ whose coordinates are given below:
    \begin{equation}
      \label{eq:def_v}
      \rho_j =
      \begin{cases}
        \displaystyle \sum_{p=j}^{n+1} L(T_{j,p}) & {\rm if~} 1\leq j\leq n+1 \\
        \displaystyle \sum_{i=j-(n+1)}^{n+1} L(T_{j,n+1+i}) & {\rm if~} n+2 \leq j
        \leq 2n+2~.
      \end{cases}
    \end{equation}
    The terms $L(T_{j,p})$ are defined in eq.~\eqref{eq:Tjp} below.
  \end{proposition}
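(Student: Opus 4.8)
The plan is to read off the formula for $\rho$ directly from the Matrix-Tree Theorem, using the enumeration of spanning trees already carried out in Proposition~\ref{prop:directed_sp_trees}. First I would recall that $\tilde A_\kappa$ is, by construction, the negative Laplacian of the edge-labeled directed graph $G$ of Fig.~\ref{fig:directed_graph} underlying the translated network~\eqref{eq:augmented_network_proc}, each edge carrying its rate-constant label. Since $G$ consists of a single strongly connected component (the translation is weakly reversible), \cite[Lemma~2]{TG} --- an instance of the Matrix-Tree Theorem --- applies and guarantees that $\ker(\tilde A_\kappa^t)$ is one-dimensional and spanned by the positive vector whose $j$-th coordinate is $\rho_j = \sum_{T} L(T)$, where the sum runs over all directed spanning trees $T$ of $G$ rooted at node $j$ and $L(T)$ is the tree-weight defined in~\eqref{eq:def_L}. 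This already reproves the existence and positivity asserted in Corollary~\ref{cor:rk_A-kappa}; it remains only to make the sum explicit.

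The second step is to substitute the classification from Proposition~\ref{prop:directed_sp_trees}: the directed spanning trees of $G$ rooted at $j$ are exactly the trees $T_{j,p}$ for which $(j,p)$ satisfies condition~\eqref{eq:condition_j_p}, and every spanning tree has this form. For $1 \leq j \leq n+1$ that condition forces $j \leq p \leq n+1$, so the sum becomes $\sum_{p=j}^{n+1} L(T_{j,p})$, which is the first case of the formula. For $n+2 \leq j \leq 2n+2$ the condition forces $j \leq p \leq 2n+2$; writing $p = n+1+i$ (so $i = p-(n+1)$) translates the range $j \leq p \leq 2n+2$ into $j-(n+1) \leq i \leq n+1$, yielding $\sum_{i=j-(n+1)}^{n+1} L(T_{j,n+1+i})$, the second case. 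The tree counts $n+2-j$ and $2n+3-j$ from Corollary~\ref{cor:number_spanning_trees} provide a consistency check on these two index ranges.

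I expect the only genuine subtlety to be bookkeeping rather than mathematics. On the one hand one must confirm that the sign and transpose conventions in~\eqref{eq:def_Ak_aug} match the hypotheses of \cite[Lemma~2]{TG}, so that it is indeed $\ker(\tilde A_\kappa^t)$ --- and not $\ker(\tilde A_\kappa)$ --- that the tree sums span. On the other hand one must carefully align the orientation of $G$, which (owing to the unidirectional edges labeled $k_{2n+1}$ and $\ell_1$) can be traversed in the clockwise direction only, with the rooted-tree structure described in points~\eqref{item:2a} and~\eqref{item:2b} of Proposition~\ref{prop:directed_sp_trees}. The explicit evaluation of each weight $L(T_{j,p})$ as a monomial in the $k_i$ and $\ell_i$ is a separate computation, deferred to~\eqref{eq:Tjp}, and is not required in order to establish that $\rho$ spans the kernel.
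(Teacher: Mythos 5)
Your proposal is correct and follows essentially the same route as the paper, whose proof consists precisely of citing Proposition~\ref{prop:directed_sp_trees} together with an application of \cite[Lemma~2]{TG} (the Matrix-Tree Theorem) to the graph $G$ of Fig.~\ref{fig:directed_graph}. Your reindexing $p = n+1+i$ for the second case and the consistency check against Corollary~\ref{cor:number_spanning_trees} are exactly the bookkeeping the paper leaves implicit.
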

  \begin{proof}
    Proposition~\ref{prop:directed_sp_trees} and application of
    \cite[Lemma~2]{TG} to $G$ from Fig.~\ref{fig:directed_graph}.
  \end{proof}

Next we will compute the product $L(T_{j,p})$ associated to each 
spanning tree $T_{j,p}$ of $G$. To this end, we recall the labeling of reactions
between adjacent nodes $j$ and $j+1$ for $1\leq j \leq n-1$:
\begin{displaymath}
  \xymatrix@C=9ex{
    \dots & \overset{j}{\bullet} \ar @<.4ex> @{-^>} ^{k_{2j-1}} [r] &
    \overset{j+1}{\bullet} \ar @{-^>} ^{k_{2j} }[l] & \dots
  }
\end{displaymath}
For a node $j$ with $n+2\leq j \leq 2n+1$, we write $j$ as $j=n+1 +i$
(so, $1\leq i \leq n+1$) and recall the labeling of reactions
between adjacent nodes $j$ and $j+1$: 
\begin{displaymath}
  \xymatrix@C=9ex{
    \dots & \overset{\tiny
      \begin{array}{c}
        (n+1) +(i+1) \\ = \\ j+1
      \end{array}
    }{\bullet} \ar @<.4ex> @{-^>} ^{\ell_{2(n+1-i)}} [r] &
    \overset{\tiny
      \begin{array}{c}
        (n+1) + i \\
        = \\
        j
      \end{array}
    }{\bullet} \ar @{-^>} ^{\ell_{2(n+1-i)}+1}[l] & \dots
  }
\end{displaymath}
Now we use Proposition~\ref{prop:directed_sp_trees} to compute
$L(T_{j,p})$, for a spanning tree $T_{j,p}$ of $G$:
\begin{itemize}
\item if $1\leq j\leq n$, the tree $T_{j,p}$ splits into four paths:
  \begin{enumerate}[{(}a{)}]
  \item $p+1 \to \cdots \to n+2$, with product of edge labels $k_{2n+1}\, \prod_{i=p+1}^n
    k_{2i-1} = \prod_{i=p+1}^{n+1} k_{2i-1}$, 
  \item $n+2\to \cdots \to 1$, with product of edge labels $\ell_1 \prod_{i=1}^n, 
    \ell_{2(n+1-i)+1} = \prod_{i=1}^{n+1} \ell_{2(n+1-i)+1}$,
  \item $1\to \cdots \to j$, with product of edge labels $\prod_{i=1}^{j-1} k_{2i-1}$, 
  \item $p\to \cdots \to j$, with product of edge labels $\prod_{i=j}^{p-1} k_{2i}$.
  \end{enumerate}
\item if $j=n+1$ (so, $p=n+1$,
  by Proposition~\ref{prop:directed_sp_trees}), the tree $T_{j,p}$
  splits into two paths:
  \begin{enumerate}[{(}a{)}]
  \item $n+2\to \cdots \to 1$, with product of edge labels 
    $\prod_{i=1}^{n+1} \ell_{2(n+1-i)+1}$, as in (b) in the previous case.
  \item $1\to \cdots \to n+1$, with product of edge labels $\prod_{i=1}^n k_{2i-1}$.
  \end{enumerate}
\item if $n+2\leq j\leq 2n+1$, write $j = n+1 + j_0$ and $p= n+1+p_0$,
  and then split $T_{j,p}$ into four paths (cf.\
  Fig.~\ref{fig:np_graph}(b)):
  \begin{enumerate}[{(}a{)}]
  \item $p+1 \to \cdots\to 1$, with product of edge labels $\ell_1 \prod_{i=p_0+1}^n
    \ell_{2(n+1-i)+1} = \prod_{i=p_0+1}^{n+1} \ell_{2(n+1-i)+1}$,
  \item $1\to \cdots \to n+2$, with product of edge labels $k_{2n+1} \prod_{i=1}^n k_{2i-1} =
    \prod_{i=1}^{n+1} k_{2i-1}$,
  \item $n+2 \to \cdots \to j$, with product of edge labels $\prod_{i=1}^{j_0-1}
    \ell_{2(n+1-i)+1}$,
  \item $p\to \cdots \to j$, with product of edge labels $\prod_{i=j_0}^{p_0-1} \ell_{2(n+1-i)}$.
  \end{enumerate}
\item if $j=2n+2$ (so, $p=2n+2$, 
  by Proposition~\ref{prop:directed_sp_trees}), the tree $T_{j,p}$
  splits into two paths:
  \begin{enumerate}[{(}a{)}]
  \item $1\to \cdots\to n+2$, with product of edge labels
    $ \prod_{i=1}^{n+1} k_{2i-1}$, as in (b) in the previous case, 
  \item $n+2\to \cdots \to 2n+2$, with product of edge labels
    $\prod_{i=1}^{n}\ell_{2(n+1-i)+1}$.
  \end{enumerate}
\end{itemize}
Thus, by definition~\eqref{eq:def_L}, we obtain for $L(T_{j,p})$,
where for $i_1<i_0$ we adopt the standard convention
$\prod_{i=i_0}^{i_1} \alpha_i := 1$ for the empty product, and, as
before, $j_0 := j-(n+1)$ and $p_0 := p-(n+1)$:
\begin{equation}
  \label{eq:Tjp}
  L(T_{j,p}) =
  \begin{cases}
    \displaystyle \prod_{i=1}^{n+1} \ell_{2(n+1-i)+1} \cdot \prod_{i=1}^{j-1}
    k_{2i-1} \cdot \prod_{i=j}^{p-1} k_{2i} \cdot
    \prod_{i=p+1}^{n+1} k_{2i-1}
    & {\rm if~} 1\leq j\leq n \\ 
    \displaystyle \prod_{i=1}^{n+1} \ell_{2(n+1-i)+1} \cdot \prod_{i=1}^n
    k_{2i-1}
    & {\rm if~} j = n+1 \\
    \displaystyle \prod_{i=1}^{n+1} k_{2i-1} \cdot \prod_{i=1}^{j_0-1}
    \ell_{2(n+1-i)+1} \cdot \prod_{i=j_0}^{p_0-1} \ell_{2(n+1-i)}
    \cdot \prod_{i=p_0+1}^{n+1} \ell_{2(n+1-i)+1}
    & {\rm if~} n+2 \leq j \leq 2n+1 \\
    \displaystyle \prod_{i=1}^{n+1} k_{2i-1} \cdot
    \prod_{i=1}^{n}\ell_{2(n+1-i)+1}      
    & {\rm if~} j = 2n+2 ~. 
  \end{cases}
\end{equation}

\end{document}